\documentclass[12pt,letterpaper]{article}
\usepackage[T1]{fontenc}
\usepackage{lmodern}
\usepackage[utf8]{inputenc}
\usepackage{geometry}
\usepackage{booktabs}
\usepackage{amsmath,amssymb,amsfonts,amsthm}
\usepackage{cancel}
\usepackage{bm}
\usepackage{mathtools}
\usepackage{changepage}
\usepackage{verbatim}
\usepackage{graphicx}
\usepackage{emptypage}
\usepackage{newlfont}
\usepackage[all,cmtip]{xy}
\usepackage[bottom]{footmisc}
\usepackage[titletoc,title]{appendix}
\usepackage{chngcntr}
\usepackage{apptools}
\usepackage{listings}
\usepackage{color}
\usepackage{sgame, tikz} 
\usepackage{pgfplots}
\usepackage{authblk}
  
\usepackage{natbib}
\AtAppendix{\counterwithin{lem}{section}}
\usepackage{caption}
\usepackage{subcaption}
\usepackage{float}
\usepackage{xr-hyper}

\usepackage{sgamevar}
\usepackage{lipsum}
\usepackage{array}
\usepackage{enumitem}

\usepackage{kpfonts}

\usepackage{setspace}
\singlespacing

\geometry{left=1.0in,right=1.0in,top=1.0in,bottom=1.0in, heightrounded}
\reversemarginpar

\newcommand{\argmax}{\operatornamewithlimits{argmax}}

\theoremstyle{definition}

\newtheorem*{defn*}{Definition}
\newtheorem{proposition}{Proposition}
\newtheorem*{prop*}{Proposition}

\newtheorem{ex}{Example}
\newtheorem*{ex*}{Example}
\newtheorem{lemma}{Lemma}

\newtheorem{assumption}{Assumption}
\theoremstyle{remark}
\newtheorem*{remarking}{Remark}

\theoremstyle{plain}
\newtheorem{theorem}{Theorem}

\newtheorem{claim}{Claim}

\renewcommand{\Pr}{\mathbb{P}}

\newcommand{\Rev}{\pi}

\newcommand{\spa}{second-price auction }
\newcommand{\fpa}{first-price auction }
\newcommand{\spadot}{second-price auction}
\newcommand{\fpadot}{first-price auction}
\DeclareMathOperator*{\Rex}{\overline{\mathbb{R}}}

\PassOptionsToPackage{hyphens}{url}\usepackage[colorlinks=true,linkcolor=blue, allcolors=blue]{hyperref}
\usepackage[nameinlink]{cleveref}
\crefname{assumption}{assumption}{assumptions}
\crefname{innercustomthm}{theorem}{theorems}
\crefname{proposition}{proposition}{propositions}
\crefname{ex}{example}{examples}
\crefname{definition}{definition}{definitions}
\crefname{claim}{claim}{claims}
\crefname{lemma}{lemma}{lemmas}
\crefname{theorem}{theorem}{theorems}
\crefname{remarking}{remark}{remarks}

\pgfplotsset{compat=1.16}

\pagestyle{plain}
\onehalfspacing

\title{Ads in Conversations}

\author[1,2]{Martino Banchio}
\author[2]{Aranyak Mehta}
\author[2]{Andres Perlroth}

\affil[1]{Università Bocconi, IGIER}
\affil[2]{Google Research}

\date{}

\begin{document}

\maketitle

\begin{abstract}
    We study the optimal placement of advertisements for interactive platforms like conversational AI assistants. 
    Importantly, conversations add a feature absent in canonical search markets --- time. The evolution of a conversation is informative about ad qualities, thus a platform could delay ad delivery to improve selection.
    However, delay endogenously shapes the supply of quality ads, possibly affecting revenue. We characterize the equilibria of first- and second-price auctions where the platform can commit to the auction format but not to its timing. We document
    sharp differences in the mechanisms' outcomes: first-price auctions are efficient but delay ad delivery, while second-price auctions avoid delay but allocate inefficiently. Revenue may be arbitrarily larger in a second-price auction than in a first-price auction. Optimal reserve prices alleviate these differences but flip the revenue ordering. 
\end{abstract}

\section{Introduction}

Conversational AI platforms, like chatbots and virtual assistants, are rapidly gaining popularity. These platforms offer users a convenient and interactive way to access information, complete tasks, and engage in entertainment. From booking flights and ordering food to playing games and getting personalized recommendations, conversational AI platforms have the potential to transform the way people interact with technology. Many of these platforms are currently offered free of charge in their basic version, but as their user base grows and their capabilities expand, it is natural to envision that, among a number of other strategies,\footnote{There are many alternative avenues for monetization of Conversational AI assistants, e.g., subscriptions, etc. We leave to future work the interesting question of whether advertising is the optimal monetization method in such a context, or instead commission-based fee structures could be more appropriate.} advertising could play a significant role in their monetization. Perplexity.ai, one of the leading conversation-based search bots, has recently begun offering advertising material during the conversation with its product.\footnote{ \url{https://www.perplexity.ai/hub/blog/why-we-re-experimenting-with-advertising}} The launch article mentions that ``\textit{advertising is the best way to ensure a steady and scalable revenue stream.}'' The sentiment appears to be shared by publishers and advertisers alike.\footnote{\url{https://www.adweek.com/media/why-marketers-welcome-ads-in-chatbots/}}

This presents a unique opportunity for targeted advertising. Unlike traditional online advertising, conversational AI platforms can glean real-time insights into user preferences through the natural flow of conversation. For example, a user interacting with a chatbot to book a hotel might reveal their budget, desired location, travel dates, and other preferences through their questions and responses. This dynamic information can be used to refine estimates of ad quality, measured in this paper by the probability of a user clicking on an ad (click-through rate).\footnote{Of course, click-through rates are only one of many possible measures of ad quality. The model will apply to generic ad quality measures as long as advertisers' payments depend on their quality.}

We study how to leverage this dynamic information in advertising auctions. A platform commits to an auction format but she chooses when to run the auction and display one ad only after receiving the advertisers' bids. Her decision is then best understood as a real options problem. She can delay the ad selection to acquire more precise quality estimates, increasing expected payments from high quality advertisers. At the same time, identifying low quality advertisers intrinsically limits the set of relevant competitors. Delay reduces market thickness, decreases competition and ultimately hurts revenue.

We develop a theoretical model to characterize this tension between information acquisition and market thickness. 
In the base model a platform (the auctioneer, she) wants to sell a single ad to one of two advertisers (buyer, he) over the course of a conversation. Each advertiser's value of showing the ad is the product of two components.
First, he values a click at $v_i$, and a user clicks on the ad with probability $\theta_i$. We model $\theta_i$ as a binary variable, where $\theta_i = 1$ indicates a good match (high click-through rate) and $\theta_i =0 $ indicates a bad match (low click-through rate). Initially, the quality score is unknown to everyone, including the advertisers.
As the user interacts with the conversational AI, the platform gradually learns whether each ad is a good or a bad match.
We allow the platform full flexibility for her information acquisition process.

We assume that the platform can commit to the auction format but not to its timing. This partial commitment assumption reflects the limited monitoring available to advertisers in online advertising settings.\footnote{
While it may be possible to ex-post verify the nature of the auction format, verifying that the platform kept its timing promises seems less plausible.}
When the platform decides to run its auction, bids are scored (i.e. ranked by the products of bid and click-through rate), the winner's ad is shown and he pays only if the ad is clicked by the user. Formally, we study the equilibria of two of the most common auction formats, first- and second-price auction, where the auction timing is optimally chosen by the platform.  We thus begin by characterizing the optimal exercise of the platform's \emph{option}: that is, the auction timing chosen by the auctioneer. We show that in equilibrium the platform delays a \fpa as much as possible, and she expedites a \spa instead.

First-price auctions realize revenue by selecting the efficient advertisement. The platform delays the allocation until it has precise information about the quality scores, ensuring she will display one of the highest click-through rate ads, thus maximizing her chances of generating revenue. However, this efficiency comes at the cost of aggressive bid shading by the advertisers. Anticipating the delayed allocation and the associated reduction in ad supply, advertisers strategically lower their bids, reducing revenue for the platform.
In contrast, second-price auctions rely on market thickness to generate revenue. To leverage such market thickness, the platform allocates the ad earlier, when information about quality scores is still imperfect. This leads to stronger competition, which drives up the price paid by the winning advertiser. Naturally this comes at the cost of potential inefficiency, as the platform may allocate the ad to a lower-quality advertiser due to incomplete information. Nonetheless, we show that the platform has an incentive to expedite the auction, avoiding the risk of a thin market but allocating inefficiently.

Intuitively, the divergence in equilibrium timing relies on the distinct ways each auction format leverages market thickness.
In a second-price auction, the price paid by the winner is determined by the second-highest bid. Delaying the auction increases the risk of losing bidders and thinning the market, potentially leading to a drastically lower price. The platform therefore chooses to allocate early to capitalize on the thicker market, even if it means sacrificing some efficiency. The auctioneer is somewhat \textit{averse} to information.
Instead, in a first-price auction the winner pays their own bid regardless of the competition. The auctioneer can focus on allocating efficiently by acquiring more precise quality information. The auctioneer here \textit{seeks} information. In equilibrium, this strategy incentivizes bid shading by advertisers, who anticipate facing less competition when the auction finally occurs.

The tension between allocation quality and market thickness brings a novel perspective to market design, where market thickness is often one of the top-level desiderata of the designer. The seminal work of \citet{Bulow1996} partly advocates for thicker markets as the best revenue instrument available. Instead, we show that optimal reserve prices can improve revenue substantially in this dynamic model. First, we show that, because advertisers in a \fpa no longer benefit from a thin market, the equilibrium bid-shading contracts, allowing the auctioneer to implement the optimal auction. In particular, the auctioneer no longer pays a price for the lack of commitment.  Instead, in a \spa advertisers may still benefit from a thin market, even when the auctioneer selects the optimal reserve price. The auctioneer benefits from market thickness when multiple advertisers submit large bids, thus she has an incentive to expedite the auction exercise. The advertisers, on the contrary, have an incentive to delay the exercise, thus in equilibrium they will misreport to induce delay. The \spa will never generate as much revenue as the \fpadot.

Our work contributes to the growing literature on online advertising auctions by explicitly incorporating the dynamic nature of information acquisition in conversational AI settings. We highlight the distinct ways in which first- and second-price auctions balance the trade-off between information and market thickness, offering valuable insights for platforms seeking to design revenue-maximizing ad auctions in this emerging space. At the same time, we provide one of the first studies of the \emph{endogenous} market thickness generated by a market design. We suspect that similar tradeoffs may appear in financial and asset markets, where uncertainty can be resolved at the expense of demand contraction. 

The paper is structured as follows. In \Cref{sec:model} we formally describe the model. \Cref{sec:main} presents the main results we described in this introduction: first, we compare simple auctions, first- and \spa without reserve prices. Then, we move on to ``optimal'' auctions, that is, first- and \spa with Myersonian reserve prices. We then discuss some extensions in \Cref{sec:discussion}.

\subsection{Literature Review}

We connect with a large literature in market design that studies online advertising auctions starting with \citet{edelman2007} and \citet{varian2009}. Early papers study the static problem (\citet{athey2010}, \citet{borgers2013`}), while many recent contributions analyze the repeated auction problem (\citet{balseiro2015}, \citet{chen2017}). We instead abstract away from cross-auction incentives to focus on dynamics incentives within a single auction.  The key force we study is intrinsically dynamic, and dynamics are incorporated in the auction mechanism. 

A large literature studies dynamic auctions; for a non-exhaustive list of reference see \citet{bergemann2010}. The literature mainly focuses on two sources of dynamics. The first source of dynamics is an evolving population of agents, each having fixed private information; examples include \citet{Parkes2003}, \citet{Gershkov2010} \citet{Pai2013}. The second source of dynamics is instead a fixed population of agents whose private information evolves dynamically. In our paper we have a fixed population of agents, but private information is also fixed. The only source of dynamics is in the ad quality scores, which the auctioneer exploits to optimize her revenue. 
Closer to our paper is \citet{chaves2024}, which also considers an auctioneer who chooses the timing to run the auction optimizing for market thickness. The paper considers delay as a means to increase market thickness, but such delay is costly because of time discounting. 
In our setting instead  delay \emph{reduces} market thickness, but increases the quality of the allocation. Both these effects are endogenous to the mechanism choice, and as a consequence instead of focusing on finding the optimal delay we compare equilibria of different auction formats.

There is a large literature on real options in finance and economics that deals with the revenue maximizer's optimal stopping problem; there are many reviews of this literature, such as \citet{sick1995}. In our setting, solving the optimal stopping problem is instrumental but not sufficient to characterize the auctioneer's problem, because advertisers optimally respond to the exercise timing. The equilibrium is a fixed-point problem of advertisers and auctioneer's decisions. In this sense, we connect to the literature on game-theoretic real option problems, such as those described in \citet{grenadier2000}. 
Methodologically we also adopt techniques from the continuous-time literature (see \citet{horner2017} for some excellent examples), and in particular we leverage the ``bad news'' Poisson model studied in \citet{keller2015} for some of our extensions. 

Finally, our motivation connects us to a nascent literature on Large Language Models (LLM), chatbots, and game theoretic models. As we mentioned, our problem can be thought of as the advertising decision of a LLM-based chatbot provider. Some recent papers in the literature have started thinking about the effect of LLMs on auctions and auction design  (\citet{duetting2023}, \citet{dubey_llm_2024}), and study how mechanism design should aggregate several LLM-generated input in an incentive compatible way for online advertising.
\citet{feizi2023} proposes a setup for online ads on LLMs. They mention that the system for predicting quality scores can ``update the estimate in (almost) real time, which will increase the accuracy of the
prediction''. We consider the effect of the prediction dynamics on the auctioneer's revenue. An iterative refinement of beliefs similar to the one we model here appears in \citet{harris2023}.

\section{Model}\label{sec:model}
Consider a user of a content platform with a private type $\theta = (\theta_1, \theta_2)$, where each component $\theta_i$ is drawn independently from a Bernoulli distribution with parameter $p$.
The type $\theta_i$ represents the user's interest in the ad of advertiser $i = 1,2$.\footnote{In \Cref{sec:discussion} we consider a model with more than $2$ advertisers.} Specifically, the user clicks on ad $i$ if and only if $\theta_i = 1$; in this case we say that ad $i$ is a \emph{good quality ad}. Each advertiser derives a value $v_i \sim F$ drawn independently from a regular distribution $F(v)$ over $[0,\overline{v}]$ from a successful click.

An auctioneer, who controls the platform, determines which advertiser's ad is displayed. She receives a payment from the winning advertiser if and only if the user clicks on the ad, and she can choose whether to allocate the ad according to a first- or \spadot.

The user interacts with the platform in a \emph{conversation} which unfolds over a continuous time $t \in \mathbb{R}_+$. During the conversation, the auctioneer gradually gleans insights about the quality of the ads.
We represent the conversation as a (possibly multidimensional) stochastic process $(X_t)_{t\in \mathbb{R}_+}$ with law $\nu^\theta$ which depends on the user's type $\theta$.\footnote{Formally, let  $\Omega$ be a Polish metric space and $\mathcal{F}$ a $\sigma$-algebra over $\Omega$. For each $\theta \in \Theta$ let $P^\theta$ be a probability measure over $(\Omega,\mathcal{F})$.
The collection $X = \big( X_t \colon \Omega \to \mathbb{R}^n \big)_{t \in {\mathbb{R}_+}}$ is such that every $X_t$ is measurable with respect to the Borel $\sigma$-algebra on $\mathbb{R}^n$. We denote by $\nu_t^\theta = P^\theta \circ X_t^{-1}$ the law of the random variable $X_t$ for a given parameter $\theta$.}
The auctioneer's information is given by the natural filtration induced by the conversation, denoted by $\mathbb{F} := \{\mathcal{F}_t\}_{t \in \mathbb{R}_+}$. The auctioneer then makes Bayesian inference about the value of $\theta$: her belief at time $t$ about the quality of advertiser $i$'s ad is 
\begin{equation}\label{eq:beliefs}
\mu_t^i = \mathbb{E}_t[\theta_i] = \mathbb{E}[\theta_i | \mathcal{F}_t],
\end{equation}
 and we denote by $\mu_t$ the vector $(\mu_t^1, \mu_t^2)$.\footnote{Note that each coordinate of the belief process is naturally bounded between $0$ and $1$ at all times.} Naturally, $\mu_0^i = p$, the Bernoulli prior, and $0\leq \mu^i_t \leq 1$ for all $t\geq 0$. We make few technical assumptions on the signal process $X$ to guarantee that the conversation is perfectly informative in the limit.
\begin{assumption}\label{assumption:identifiability}
Assume
\begin{enumerate}
    \item For any $t>0$ such that $\mathcal{F}_t \subset \mathcal{F}$, there exists a $t' > t$ such that  $\mathcal{F}_{t} \subset \mathcal{F}_{t'}$. 
    \item $\theta \to P^\theta(A)$ is measurable  with respect to the Borel $\sigma$-algebra on $\Theta$ for every $A \in \mathcal{F}$.
    \item There exists a measurable function $f \colon \Omega \to \Theta$ such that $f(\omega) = \theta$ almost surely with respect to $P^\theta$.
\end{enumerate}
\end{assumption}
Under these conditions, a version of Doob's consistency theorem \citep{Schwartz1965} gives the following:

\begin{proposition}\label{proposition:consistency}
If \Cref{assumption:identifiability} is satisfied, then 
$P\big( \lim_{t \to \infty} \mu_t = \theta \ |\  \theta \big) = 1$ almost surely in $\theta$.
\end{proposition}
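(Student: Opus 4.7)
The plan is to invoke the martingale convergence theorem on the posterior belief process and then identify the limit via the identifiability condition, which together yield Schwartz's formulation of Doob's consistency theorem in this context.

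First, for each $i \in \{1,2\}$, I observe that $\mu_t^i = \mathbb{E}[\theta_i \mid \mathcal{F}_t]$ defines a uniformly bounded martingale relative to $\mathbb{F}$, since $\theta_i \in \{0,1\}$ and Assumption \ref{assumption:identifiability}.1 ensures the filtration is increasing (and strictly so, so that the natural limit is nontrivial). By Doob's martingale convergence theorem there exists an $\mathcal{F}_\infty$-measurable random variable $\mu_\infty^i$ such that $\mu_t^i \to \mu_\infty^i$ almost surely and in $L^1$, where $\mathcal{F}_\infty := \sigma\bigl(\bigcup_{t \geq 0} \mathcal{F}_t\bigr)$, and moreover $\mu_\infty^i = \mathbb{E}[\theta_i \mid \mathcal{F}_\infty]$.

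Second, I would identify $\mu_\infty^i$ with $\theta_i$ by showing that $\theta_i$ is $\mathcal{F}_\infty$-measurable up to a $P^\theta$-null set. Assumption \ref{assumption:identifiability}.3 provides a measurable $f \colon \Omega \to \Theta$ such that $f(\omega) = \theta$ almost surely under $P^\theta$, and Assumption \ref{assumption:identifiability}.2 guarantees that $\theta \mapsto P^\theta$ is measurable, which is needed to apply Fubini-type arguments when averaging over the prior. The idea is then to approximate $f$ by a sequence of $\mathcal{F}_{t_n}$-measurable functions (using the increase of the filtration from Assumption \ref{assumption:identifiability}.1 and a standard monotone class / approximation argument) and conclude that $f$ is $\mathcal{F}_\infty$-measurable. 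Once $\theta_i$ agrees $P^\theta$-a.s.\ with an $\mathcal{F}_\infty$-measurable function, $\mathbb{E}[\theta_i \mid \mathcal{F}_\infty] = \theta_i$ almost surely, hence $\mu_\infty^i = \theta_i$. Applying this to both coordinates and invoking Fubini over the prior gives the statement of the proposition.

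The main obstacle is step two: the $\mathcal{F}_\infty$-measurability of the recovery map $f$. The assumption states that $f$ is measurable with respect to the ambient $\sigma$-algebra $\mathcal{F}$, not necessarily with respect to the smaller $\mathcal{F}_\infty$ generated by $X$. Making the implicit identification $\mathcal{F}_\infty = \sigma(X_t : t \geq 0)$ (up to null sets) is where Assumption \ref{assumption:identifiability}.1 does the real work; the cleanest path is probably to cite \citet{Schwartz1965} directly, whose theorem packages precisely this conversion of identifiability plus measurability into posterior consistency, avoiding the need to reprove the approximation lemma by hand.
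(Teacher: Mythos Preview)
Your proposal is correct and aligns with the paper's approach: the paper does not prove this proposition at all but simply states it as a consequence of Doob's consistency theorem in the form given by \citet{Schwartz1965}. Your sketch unpacks the standard martingale-convergence-plus-identifiability argument behind that theorem and then lands on the same citation, so there is nothing to compare.
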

That is, we are assuming that the platform will eventually be able to perfectly learn the type of the user.  
Note that \Cref{assumption:identifiability} is satisfied by most stochastic processes used to model news arrival. 
\begin{ex}\label{ex:bad news} 
We will later adopt a specific choice of news process, the Poisson ``bad news'' model  \citep{keller2015}: the platform independently receives no news about ad $i$ until the tick of an exponential clock with parameter $\lambda\cdot(1-\theta_i)$ for some $\lambda>0$. The arrival immediately reveals that the ad is of bad quality, i. e. $\theta_i = 0$. This two-dimensional Poisson process clearly satisfies the identifiability assumptions above. We represent a particular path of such beliefs in  \Cref{fig:poisson belief}. 
\end{ex}

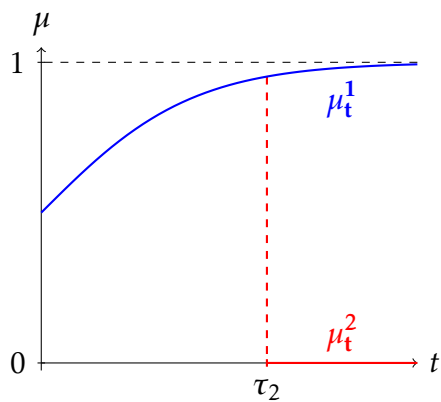
\begin{figure}[h]
    \centering
    \begin{tikzpicture}
        \draw[->] (0,0) -- (5,0) node[right] {$t$}; 
        \draw[->] (0,-0.1) -- (0,4.2) node[above] {$\mu$};
        \draw[dashed] (5,4) -- (0,4);
        \draw[dashed, red, thick] (3, {4/1.05} ) -- (3,0);
        \draw[red, thick] (3, 0.05 ) -- (3,-0.05) node[below] {\textcolor{black}{$\tau_2$}};
        \draw[red, thick] (3,0) --(5,0);
        \draw (0.05,4) -- (-0.05,4) node[left] {$1$};
        \draw (0.05,0) -- (-0.05,0) node[left] {$0$};
        
        \node[below] at (4, 3.9) {\textcolor{blue}{$\mathbf{\mu^1_t}$}};
        \node[above] at (4, 0) {\textcolor{red}{$\mathbf{\mu^2_t}$}};

        \draw[scale=1,domain=0:5,smooth,variable=\x,blue, thick] plot ({\x}, {4/(1 + exp(-\x))}); 
    \end{tikzpicture}
    \caption{This figure depicts a sample path taken by the belief process $\mu_t$ when the conversation follows the Poisson ``bad news'' model. The beliefs drift upwards in the absence of news (as is the case for the process $\mu^1_t$). Instead, news about advertiser $2$ arrived at time $\tau_2$, and thus $\mu^2_t = 0$ from $\tau_2$ onwards.}
    \label{fig:poisson belief}
\end{figure}

\subsection{The Auctioneer's Problem}

The auctioneer strategically chooses the ad to display according to either a first- or \spa, and then she decides at what time $\tau \in \Rex \coloneqq \mathbb{R}_+ \cup \{+\infty\}$ to run such auction. The timing of the allocation problem is as follows. 
\begin{itemize}
    \item First, before the conversation begins, the auctioneer announces an auction format. She commits to running either a first- or \spadot, but she cannot commit to a time $\tau$ at which such format will be run. 
    \item Then, the auctioneer solicits bids $b_1, b_2$ from the bidders (advertisers). All of the advertisers' strategic bidding decisions take place before the conversation begins.
    \item The conversation starts at time $0$, and the auctioneer begins observing the conversation $X_t$.
    \item At some time $\tau \in \Rex$, which may depends on the advertisers' bids and the realizations of the uncertainty, the auctioneer decides to ``stop'' at which point she runs the auction format she committed to.
\end{itemize}
If the winner is advertiser $i$, the auctioneer will collect revenue $\pi_{1P} = b_i\theta_i$ in a \fpa and 
$\pi_{2P} = \frac{b_{-i}\mu^{-i}_\tau}{\mu^i_\tau}\theta_{i}$ in a \spadot.\footnote{The latter expression is exactly the VCG payment for the auction where bids are weighted by the quality scores $\mu_t$.}

We make two main assumptions with respect to timing. First, advertisers cannot adjust their bids dynamically, but instead make all their strategic considerations at time $0$. Latency considerations alone justify such an assumption. Additionally, it seems unrealistic for a platform to require that advertisers monitor the conversation and update their bids based on the shape of news. Even more simply, allowing advertisers to dynamically adjust their bids based on calendar time would require a complex infrastructure on both the platform and the advertiser's end. Relaxing this assumption may be of independent interest.

Second, we assume that the auctioneer cannot commit to a time to run the auction.
We defer a discussion of this assumption in a later section.
Then, once she receives the bids of the advertisers, the auctioneer faces a real options problem. She can hold off on running the auction to learn more about the ad qualities, because she profits only when she displays a good quality ad. On the other hand, information is costly. For one, the user may leave the platform at any point in time, making the profit opportunity vanish. More subtly, delaying the auction erodes the competitive landscape. Auctions rely on competition to price the ad opportunities, but delay reduces market thickness and thus greatly diminishes competition in the auction. We show that the auction format plays an important role in the resolution of this trade-off between information and market thickness. 

\section{Auction formats, market thickness, and information}\label{sec:main}

In order to study the auctioneer's problem, we need to characterize the optimal ``exercise policy'', i.e. the optimal time in the conversation to stop and run the auction from the auctioneer's perspective, conditional on the advertisers' bids. 

\subsection{Simple auctions}\label{subsection:noreserve}
Suppose an auctioneer is facing bidders whose types come from unknown distributions. Instead of guessing the reserve price, she may simply run an auction without reserves: after all, \citet{Bulow1996} first, and more recently \citet{Hartline2009}, show that the Vickrey auction generally performs well against the optimal Bayesian auction. In this subsection we thus focus our attention on first- and \spa without reserve prices. 

Consider an auctioneer who committed to running a \spadot. She observes some reports $b_1, b_2$ and then must decide when to stop to maximize her expected revenue. Her problem can be written as
\[\max_\tau \mathbb{E}_0 \Big[ \min\big\{\mu^1_\tau b_1, \mu^2_\tau b_2\big\}\Big]\]
We are ready to prove our first result. 
\begin{lemma}\label{lemma:SPAstopping}
The optimal exercise policy for the \spa is $\tau^*_{2P} = 0$.
\end{lemma}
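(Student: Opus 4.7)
The plan is to show that the objective is a supermartingale and hence maximized by stopping immediately. The key structural observations are that (i) each belief process $\mu^i_t = \mathbb{E}[\theta_i \mid \mathcal{F}_t]$ is a martingale by construction, (ii) each such martingale is bounded in $[0,1]$, and (iii) the function $(x_1, x_2) \mapsto \min\{x_1 b_1, x_2 b_2\}$ is concave.

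First, I would fix the advertisers' bids $b_1, b_2 \geq 0$ and any stopping time $\tau$ adapted to the filtration $\mathbb{F}$ (including possibly $\tau = +\infty$, where $\mu^i_\infty = \theta_i$ by \Cref{proposition:consistency}). I would then apply Jensen's inequality to the concave min function: since the product process $(\mu^1_t b_1, \mu^2_t b_2)$ is a vector of martingales, for any deterministic $t$ we trivially have $\mathbb{E}_0[\min\{\mu^1_t b_1, \mu^2_t b_2\}] \leq \min\{\mathbb{E}_0[\mu^1_t b_1], \mathbb{E}_0[\mu^2_t b_2]\} = \min\{p b_1, p b_2\}$. Extending this to an arbitrary stopping time $\tau$ uses the optional stopping theorem, which applies without additional hypotheses because $\mu^i_t$ is a uniformly bounded martingale, giving $\mathbb{E}_0[\mu^i_\tau] = \mu^i_0 = p$ for $i=1,2$.

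Combining these two inequalities yields
\[
\mathbb{E}_0\!\left[\min\{\mu^1_\tau b_1,\, \mu^2_\tau b_2\}\right] \;\leq\; \min\{p b_1,\, p b_2\},
\]
valid for every feasible stopping time $\tau$. The right-hand side is exactly the value achieved by $\tau = 0$, since $\mu^i_0 = p$ deterministically, so stopping immediately attains the upper bound and is therefore optimal.

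The main subtlety (rather than obstacle) is the correct application of Jensen together with optional stopping when $\tau$ can be infinite: one must invoke boundedness of $\mu^i_t$ to justify $\mathbb{E}_0[\mu^i_\tau] = p$ even for unbounded or infinite $\tau$, and one must interpret $\mu^i_\infty$ via the consistency result of \Cref{proposition:consistency}. Once that is handled, the proof reduces to a single Jensen-plus-martingale argument and needs no case analysis on the structure of the news process $X$.
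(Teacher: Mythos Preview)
Your proposal is correct and is essentially the same argument as the paper's: both exploit that $\min\{\mu^1_t b_1,\mu^2_t b_2\}\le \mu^i_t b_i$ pointwise together with the martingale property of $\mu^i_t$ to conclude the revenue process is a supermartingale and hence optimally stopped at $0$. Your treatment is in fact slightly more careful than the paper's in explicitly invoking boundedness of $\mu^i_t$ (and the consistency result for $\tau=\infty$) to justify optional stopping for possibly unbounded stopping times.
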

\begin{proof}
Following the usual Bayesian arguments, notice that the belief process $\mu^i_t$ is a martingale:  $\mathbb{E}_t[\mu_{s}] = \mu_t$ for any $s \geq t$ 
Then, by definition 
\[\mathbb{E}_t \Big[\min\big\{\mu^1_{s} b_1, \mu^2_{s} b_2\big\}\Big] \leq \mathbb{E}_t \Big[\mu^i_{s} b_i \Big] = \mu^i_t b_i\]
for all $i=1,2$ and for all $s \geq t$. 
This implies that 
\[\mathbb{E}_t \Big[\min\big\{\mu^1_{s} b_1, \mu^2_{s} b_2\big\}\Big] \leq\min\big\{\mu^1_{t} b_1, \mu^2_{t} b_2\big\}.\]
Thus, the revenue process is a super-martingale. Doob's Optional Stopping Theorem tells us that a super-martingale is optimally stopped at $0$.
\end{proof}
\Cref{lemma:SPAstopping} says that the auctioneer stops at time $0$ \emph{uniformly} over the reports of the advertisers. In particular, 1. she does not acquire any information about the user's type; and 2. she will not use time as a screening device, so the timing of the auction will not affect the incentives of the advertisers, who bid truthfully. The \spa is thus a truthful auction in this setting, and it is executed at time $0$.

The auctioneer is willing to allocate to the bidder with the highest value despite having only receiving payment from him with probability $\mu_0$. Recall that the auctioneer will profit only if the ad she displays is of good quality. By delaying the auction she could increase her chances of profiting from the displayed ad. However, the probability that any given advertiser drops out of the race ($\mu^i_t = 0$) increases with delay. The \spa relies on competition to price the ad opportunity, but as soon as the belief about one advertiser drops to $0$ competition vanishes, leaving a perfectly good opportunity to be sold for free to the remaining advertiser. The auctioneer runs the auction immediately to avoid such a catastrophic outcome. 

Intuitively, it may seem that the \spa suffers from lack of competition more than the \fpadot. That is because even if one advertiser drops out, in the \fpa the other advertiser will purchase the ad opportunity at the bid he chose at time $0$, before realizing his competition disappeared. We show next that the \fpa delays the exercise of the auction until the auctioneer is sure to allocate efficiently.
Consider then an auctioneer who committed to running a \fpadot. She observes bids $b_1, b_2$ and then decides when to stop to maximize her expected revenue, written as
\[\max_\tau \mathbb{E}_0 \Big[ \max\big\{\mu^1_\tau b_1, \mu^2_\tau b_2\big\}\Big]\]
Using this expression, we can characterize the optimal exercise policy for the \fpadot.
\begin{lemma}\label{lemma:FPAstopping}
The optimal exercise policy for the \fpa is $\tau^*_{1P} = \infty$.
\end{lemma}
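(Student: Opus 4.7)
The plan is to mirror the proof of \Cref{lemma:SPAstopping}, but with the roles of min and max (and thus the direction of the inequality) reversed. The key observation is that the belief process $\mu^i_t$ is a martingale by iterated expectations, so $\mu^i_t b_i$ is a martingale for any fixed bid $b_i \geq 0$. The maximum of two martingales is, however, a \emph{sub}-martingale, not a super-martingale, which reverses the conclusion of the optional stopping argument.

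Concretely, I would argue as follows. For any $s \geq t$ and any $i \in \{1,2\}$,
\[\mathbb{E}_t\Big[\max\{\mu^1_s b_1, \mu^2_s b_2\}\Big] \geq \mathbb{E}_t\Big[\mu^i_s b_i\Big] = \mu^i_t b_i,\]
simply because $\max\{x,y\} \geq x$ and $\max\{x,y\} \geq y$ pointwise. Taking the maximum of the right-hand side over $i$ yields
\[\mathbb{E}_t\Big[\max\{\mu^1_s b_1, \mu^2_s b_2\}\Big] \geq \max\{\mu^1_t b_1, \mu^2_t b_2\},\]
so the revenue process is a bounded sub-martingale. Doob's Optional Stopping Theorem then implies that the expected revenue is non-decreasing in the stopping time, so the auctioneer prefers to stop as late as possible.

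To conclude that the optimum is genuinely $\tau^*_{1P}=\infty$ (which is allowed because $\tau \in \Rex$), I would invoke \Cref{proposition:consistency}: as $t \to \infty$, $\mu^i_t \to \theta_i \in \{0,1\}$ almost surely, so the revenue at $\tau=\infty$ equals $\max\{\theta_1 b_1, \theta_2 b_2\}$. Since the process is uniformly bounded by $\max\{b_1, b_2\}$, bounded convergence lets me pass to the limit in the expectation, confirming that this limit is the supremum of expected revenues over all stopping times.

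The step that deserves the most care is the justification for passing from ``stopping later is weakly better'' to ``stopping at $\infty$ is optimal'': one must rule out the possibility that the supremum is approached only along a sequence of finite stopping times and is not attained. Boundedness of bids and of the belief process makes this routine via dominated convergence, but it is worth stating explicitly so that the stopping rule $\tau \equiv \infty$ is well-defined within the model's admissible class.
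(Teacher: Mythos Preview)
Your proposal is correct and follows essentially the same approach as the paper: both establish that $\max\{\mu^1_t b_1,\mu^2_t b_2\}$ is a bounded sub-martingale via the pointwise inequality $\max\{x,y\}\geq x,y$ and the martingale property of beliefs, then conclude that the auctioneer weakly prefers to delay indefinitely. Your handling of the $\tau=\infty$ limit (via \Cref{proposition:consistency} and bounded convergence) is in fact slightly more explicit than the paper's, which simply notes that the pointwise limit exists by boundedness after the process becomes a martingale once some $\mu^i_t$ hits zero.
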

\begin{proof}
By definition 
\[\mathbb{E}_t \Big[\max\big\{\mu^1_{s} b_1, \mu^2_{s} b_2\big\}\Big] \geq \mathbb{E}_t \Big[\mu^i_{s} b_i \Big] = \mu^i_t b_i\]
for all $i=1,2$ and for all $s \geq t$. 
This implies that 
\[\mathbb{E}_t \Big[\max\big\{\mu^1_{s} b_1, \mu^2_{s} b_2\big\}\Big] \geq\max\big\{\mu^1_{t} b_1, \mu^2_{t} b_2\big\}\]
for any $s \geq t$.
The revenue process of the \fpa is a sub-martingale until the first $t$ such that $\exists i \mu^i_t = 0$. Denote such $t$ by $\tilde{t}$. A sub-martingale's expected value increases over time, so the auctioneer will never stop until $\tilde{t}$. At $\tilde{t}$ the revenue process turns into the (bounded) martingale $\mu^i_t \cdot b_i$ for some $i$, and the auctioneer is indifferent between stopping and continuing at any time $t$. We say the auctioneer "stops at $\infty$", where the value of $\mu^i_\tau b_i$ at $\tau = \infty$ is given by the pointwise limit of the process $\mu^i_t b_i$ which exists almost surely because $\mu^i_t b_i$ is bounded.
\end{proof}
The \fpa is efficient: the advertiser that displays his ad is always the one with the largest $b_i \cdot \theta_i$. This auction is not truthful however: the advertisers at time $0$ foresee that, conditional of having a good quality ad, they will have no competition with probability $1-\mu_0$. Only with probability $\mu_0$ they will need to outbid their opponent, and their shading will reflect the probability of this event.
Consider the problem of bidder $i$, when bidder $-i$ is bidding according to the bidding function $\beta(v)$. For a given value $v$, bidder $i$ chooses a bid $b$ to maximize his expected profit
\[\big[\mu_0 F(\beta^{-1}(b)) + (1-\mu_0)\big](v-b).\]
Taking first-order conditions, and assuming a unique symmetric equilibrium,\footnote{We will show that there is a unique symmetric equilibrium in \Cref{proposition:fpa_continuity}.} we have the following ODE:
\[\underbrace{\beta'(v)F(v) + \beta(v)f(v) = vf(v)}_{\text{Classic ODE for FPA}} - \underbrace{\frac{1-p}{p}\beta'(v)}_{\text{Extra term}},\]
where the extra term \emph{reduces} the bid $\beta(v)$ to account for the likely absence of competition. The solution to this ODE is 
\[\beta_{FPA}(v) = \frac{1}{\frac{1-p}{p} + F(v)} \Big[ \int_0^v yf(y) dy \Big]\]
where the term $\frac{1-p}{p}$ represent the advertiser's expectations (at time $0$) about market thickness at allocation time (time $\infty$).

How does the bid shading affect the revenue to the auctioneer? The following proposition responds to this question.
\begin{theorem}\label{theorem:revenue_r0}
The revenue $\Rev_{2P}$ generated by the truthful optimally-stopped \spa dominates the revenue $\Rev_{1P}$ generated by an optimally-stopped \fpa for any value distribution $F$. In particular,
\[\frac{\Rev_{2P}}{\Rev_{1P}} = \frac{1}{\mu_0}\]
\end{theorem}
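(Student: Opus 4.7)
The plan is to compute each revenue in closed form and divide. By \Cref{lemma:SPAstopping}, the second-price auction is stopped at $\tau=0$, where $\mu^1_0=\mu^2_0=\mu_0$. Because the stopping time cannot depend on the bids, the exercise is effectively static and truthful bidding ($b_i = v_i$) is optimal. The winner is the high-value bidder, his payment conditional on click equals $\mu_0^{-i} b_{-i}/\mu_0^{i} = \min(v_1,v_2)$, and the click happens with probability $\mu_0$. Hence
$$\Rev_{2P} = \mu_0 \cdot \mathbb{E}[\min(v_1,v_2)].$$

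For the first-price auction, \Cref{lemma:FPAstopping} implies the auctioneer waits until $\tau=\infty$, so $\mu^i_\infty = \theta_i$ almost surely. I take the stated $\beta_{FPA}(v)=\tfrac{p}{1-p+pF(v)}\int_0^v y f(y)\,dy$ as the symmetric equilibrium (uniqueness and global optimality are deferred to \Cref{proposition:fpa_continuity}). Conditional on a value $v$, bidder $i$ generates payment only when $\theta_i=1$ (probability $p$) and either $\theta_{-i}=0$ (probability $1-p$) or $\theta_{-i}=1$ with $v_{-i}<v$ (probability $pF(v)$). His expected payment is therefore
$$p\bigl[(1-p)+pF(v)\bigr]\,\beta_{FPA}(v) = p^2 \int_0^v y f(y)\,dy,$$
where the key step is the cancellation of the factor $(1-p)+pF(v)$. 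Summing over the two symmetric bidders and swapping the order of integration by Fubini,
$$\Rev_{1P} = 2p^2 \int_0^{\overline{v}} y f(y)\bigl(1-F(y)\bigr)\,dy = \mu_0^2\cdot \mathbb{E}[\min(v_1,v_2)],$$
using the identity $\mathbb{E}[\min(v_1,v_2)]=2\int_0^{\overline v} y f(y)(1-F(y))\,dy$.

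Taking the ratio gives $\Rev_{2P}/\Rev_{1P}=1/\mu_0$, which both establishes revenue dominance (since $\mu_0\leq 1$) and pins down the exact multiplicative gap. The main technical obstacle is really the equilibrium characterization of the first-price auction — confirming that the stated ODE solution is the unique symmetric Bayes-Nash equilibrium and that deviations are globally unprofitable — which the paper isolates in a later result and which I would simply invoke. Once that is in hand, the proof boils down to the tidy substitution above, whose interpretation is that $\Rev_{1P}$ equals the revenue of a classical static first-price auction on ``quality-discounted'' values $\mu_0 v_i$.
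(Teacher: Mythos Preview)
Your computation is correct and yields the same ratio, but the route differs from the paper's in two notable ways. First, for the qualitative dominance the paper does not compute anything: it argues that a \spa stopped at $\tau^*_{1P}$ is still truthful (its stopping time is bid-independent), hence by static revenue equivalence it earns exactly $\Rev_{1P}$; and since the auctioneer with truthful reports \emph{prefers} to stop at $\tau^*_{2P}=0$, her revenue can only go up. This revealed-preference step is more conceptual and would survive in settings where the exact ratio is not tractable. Second, for the quantitative ratio the paper works with virtual values rather than the explicit bid function: it writes $\Rev_{1P}=\mu_0^2\,\mathbb{E}[\max\{\phi(v_1),\phi(v_2)\}]+2\mu_0(1-\mu_0)\,\mathbb{E}[\phi(v)]$ and uses $\mathbb{E}[\phi(v)]=0$ to collapse the thin-market term, while $\Rev_{2P}=\mu_0\,\mathbb{E}[\max\{\phi(v_1),\phi(v_2)\}]$. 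That argument needs only the \emph{allocation rule} (monotone $\beta$), not the closed form of $\beta_{FPA}$. Your approach, by contrast, plugs in $\beta_{FPA}$ directly and exploits the neat cancellation of the factor $(1-p)+pF(v)$ against the denominator of the bid, landing on $\mu_0^2\,\mathbb{E}[\min(v_1,v_2)]$; it is more elementary and avoids Myerson's machinery entirely, at the cost of leaning harder on the exact equilibrium characterization you defer to \Cref{proposition:fpa_continuity}.
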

\begin{proof} 
To prove the first part of the theorem, notice that the \spa is truthful regardless of its stopping time, as long as the stopping time does not depend on the values reported by the advertisers. In particular, consider a \spa stopped at $\tau^*_{1P}$. This stopping time depends solely on the path of the beliefs $\mu_t$, but cannot be influenced by the reports $v_1, v_2$, as shown in \Cref{lemma:FPAstopping}. Then, the static Revenue Equivalence Theorem (RET) applies to auctions that run at time $\tau^*_{1P}$: the first- and \spa that stop at $\tau^*_{1P}$ allocate identically, and therefore generate the same revenue. But the auctioneer receiving truthful reports prefers to stop at time $\tau^*_{2P} < \tau^*_{1P}$, which implies that her revenue must be (weakly) larger by stopping earlier than by delaying the auction until $\tau^*_{1P}$. Therefore, the \spa generates weakly higher revenue than the \fpadot.

To prove the second part of the theorem, we leverage the virtual value characterization of the auctioneer's revenue. The expected revenue of an optimally-exercised \fpa is given by $\mu_0^2 \mathbb{E}_v \big[\max\{\phi(v_1),\phi(v_2)\}\big] + 2 \mu_0 (1-\mu_0) \mathbb{E}_v\big[\phi(v)\big]$ where $\phi(v)$ is the virtual value function. Intuitively, the \fpa allocates efficiently: with ex-ante probability $2\mu_0(1-\mu_0)$ there is only one ad which is a good match, and with probability $\mu_0^2$ the auctioneer can choose the good match with the largest virtual value.
The usual integration-by-parts argument shows that $\mathbb{E}_v[\phi(v)] = 0$, thus implying that $\Rev_{1P} = \mu_0^2 \mathbb{E}_v \big[\max\{\phi(v_1),\phi(v_2)\}\big]$. Similarly, the expected revenue of the optimally-exercised \spa is $\Rev_{2P} = \mu_0 \mathbb{E}_v \big[\max\{\phi(v_1),\phi(v_2)\} \big]$. Intuitively, the \spa allocates immediately, with probability $1$ to the bidder with the largest virtual value. She will collect her revenue only when that bidder is a good match, which happens with probability $\mu_0$.
Therefore, 
\[\frac{\Rev_{2P}}{\Rev_{1P}} = \frac{\mu_0 \mathbb{E}_v \big[\max\{\phi(v_1),\phi(v_2)\} \big]}{\mu_0^2 \mathbb{E}_v \big[\max\{\phi(v_1),\phi(v_2)\} \big]} = \frac{1}{\mu_0}\]
\end{proof}

The bid shading induced by the \fpa is detrimental to revenue, so much so that the auctioneer would rather implement a \spa and allocate inefficiently. In fact, regardless of the value distribution, the revenue obtained by a \fpa are exactly a fraction $\mu_0$ of the revenue obtained by a \spa. Thus the benefit from adopting an optimally-exercised \spa increases as the probability of encountering a good-quality ad decreases.

The case in which the user does not leave a conversation leads to a crisp, stark result. When information would otherwise be costless, the only disincentive to information acquisition is market thickness. Different auction formats leverage market thickness differently: a \spa requires competition to price the good for sale. A \fpa instead requires competition to limit the extent of the bidders' equilibrium shading. Because shading takes place ex-ante, this limits the ability of prices to react to a changing competitive landscape. The \spa retains this flexibility, so that the auctioneer can commit in a sequentially rational way not to learn anything about the competitive landscape.

\subsection{Optimal Auctions}\label{sec:reserves}

Our previous findings illustrate that the auctioneer's timing decision heavily depends on market thickness. Both first- and second-price auctions leverage market thickness for revenue generation, albeit through distinct channels. If the auctioneer knew the distribution $F$ of the bidder's values, she may carefully choose a reserve price to try and limit the impact of market thickness on her revenue, despite one of the classic insights of market design \citep{Bulow1996} highlighting how the optimal choice of reserve prices may be of second-order importance to market thickness. 
We thus study the design of first- and second-price auctions with reserve prices. With a reserve price, the auctioneer mitigates the concern of a thinning market, relying on this price floor to maintain revenue.

First, the reserve price does not affect the information acquisition strategy of the auctioneer in a \fpadot.

\begin{lemma}\label{lemma:FPAreserve}
The optimal exercise policy for the \fpa with reserve price $R$ is $\tau^*_{1P,R} = \infty$. 
\end{lemma}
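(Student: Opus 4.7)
The plan is to adapt the submartingale argument from the proof of \Cref{lemma:FPAstopping}, taking care of the non-convexity introduced by the reserve. Write $M_t = \max\{\mu^1_t b_1,\mu^2_t b_2\}$ for the highest quality-weighted bid. Because the winner pays his own bid exactly when his quality-weighted bid clears the reserve, the auctioneer's expected revenue from stopping at $\tau$ is $\Pi_\tau = M_\tau \mathbf{1}\{M_\tau \geq R\}$. The goal is to show $\mathbb{E}[\Pi_\tau] \leq \mathbb{E}[\Pi_\infty]$ for every stopping time $\tau$.

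The first step is to partition bidders according to whether they can ever clear the reserve. Let $S = \{i : b_i \geq R\}$; since $\mu^i_t \leq 1$, any $i \notin S$ satisfies $\mu^i_t b_i \leq b_i < R$ at every $t$, so such a bidder's weighted bid is never eligible. Setting $M^S_t = \max_{i \in S} \mu^i_t b_i$, whenever $M^S_t \geq R$ any $j \notin S$ has $\mu^j_t b_j < R \leq M^S_t$, so $M^S_t = M_t$. Consequently $\Pi_t = M^S_t \mathbf{1}\{M^S_t \geq R\}$, and if $S = \emptyset$ revenue is identically zero and the claim is vacuous; otherwise we work with $M^S_t$.

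By the same argument used in \Cref{lemma:FPAstopping}, $M^S_t$ is a submartingale bounded above by $\max_{i \in S} b_i$. It therefore converges almost surely and in $L^1$ to a limit $M^S_\infty$, and for any stopping time $\tau$ optional stopping yields $\mathbb{E}[M^S_\tau] \leq \mathbb{E}[M^S_\infty]$. At $\tau = \infty$, \Cref{proposition:consistency} gives $\mu^i_\infty = \theta_i \in \{0,1\}$, so $\mu^i_\infty b_i \in \{0,b_i\}$, and for $i \in S$ the nonzero value is at least $R$. Therefore $M^S_\infty$ takes values in $\{0\} \cup [R,\max_{i \in S} b_i]$, which makes the reserve indicator vacuous at infinity: $M^S_\infty \mathbf{1}\{M^S_\infty \geq R\} = M^S_\infty$ almost surely. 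Chaining,
\[
\mathbb{E}[\Pi_\tau] = \mathbb{E}\bigl[M^S_\tau \mathbf{1}\{M^S_\tau \geq R\}\bigr] \leq \mathbb{E}[M^S_\tau] \leq \mathbb{E}[M^S_\infty] = \mathbb{E}[\Pi_\infty],
\]
so $\tau^*_{1P,R} = \infty$ is optimal.

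The main obstacle is that $\Pi_t$ itself is not a submartingale: the discontinuity of $x \mapsto x \mathbf{1}\{x \geq R\}$ at $R$ means that when the weighted-bid process is near the reserve, $\mathbb{E}_t[\Pi_s]$ can drop below $\Pi_t$, so optional stopping cannot be invoked directly on $\Pi_t$. The decomposition into $S$ and its complement, together with the limiting behavior $\mu^i_\infty \in \{0,1\}$, converts the indicator into a harmless restriction at $\tau = \infty$, allowing the standard submartingale bound on $M^S_t$ to carry the day.
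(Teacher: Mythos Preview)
Your core submartingale argument is right, but you have misread the reserve-price mechanism. In the paper the reserve is imposed on the per-click bid $b_i$, not on the quality-weighted bid $\mu^i_t b_i$: notice that the paper's own proof splits on $b_1,b_2 > R$, then $b_i > R > b_{-i}$, then $b_1,b_2 < R$. Under that reading a bidder is eligible iff $b_i \geq R$, eligibility is determined once and for all at time zero, and the expected revenue from stopping at $\tau$ is simply $M^S_\tau = \max_{i \in S}\mu^i_\tau b_i$, with no indicator attached. The ``main obstacle'' you devote your last paragraph to --- the jump of $x\mapsto x\,\mathbf{1}\{x \ge R\}$ at $R$ --- is therefore an artifact of the wrong revenue formula and never arises.

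Once the formula is corrected, your argument collapses to exactly the paper's. The revenue process $M^S_\tau$ is a max of $|S|$ bounded martingales (identically zero when $S=\emptyset$), hence a bounded submartingale, and optional stopping yields $\tau^*=\infty$. The paper just writes this out as the three-way case split on $|S|\in\{0,1,2\}$, which is your partition by $S$ spelled out. In fact the inequality $\mathbb{E}[M^S_\tau] \leq \mathbb{E}[M^S_\infty]$ already sitting in the middle of your chain is the entire content of the paper's proof; the first and last steps of your chain are detours created by the superfluous indicator.
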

\begin{proof}
The auctioneer's expected revenue when $b_1,b_2 > R$ can again be written as 
\[\max_\tau \mathbb{E}_0 \Big[ \max\big\{\mu^1_\tau b_1, \mu^2_\tau b_2\big\}\Big],\]
and we know from \Cref{lemma:FPAstopping} that the stopping time takes the form 
$\tau^*_{1P} = \infty$
Instead, if $b_i > R > b_{-i}$, the auctioneer's revenue at time $t$ is the martingale $\mathbb{E}_t[\theta_i b_i]$, so she is indifferent between stopping and continuing at all times --- in particular, we can assume she will stop at infinity with a similar argument to \Cref{lemma:FPAstopping}. 
Finally, if $b_1, b_2 < R$ the auctioneer's revenue is $0$ regardless of her stopping decision, so she is again indifferent between all stopping times, and we assume she stops at infinity.
\end{proof}

Bidders can no longer shade as aggressively because the reserve price acts as a backstop. The \fpa is no longer efficient, because of the introduction of a reserve price, but it maximizes the revenue for the auctioneer.
To see this, notice that the revenue-optimal mechanism is one that solves, for all pairs $v_1, v_2$ and user types $\theta$
\begin{equation}\label{eq:optimal auction}
    x(v_1, v_2 | \theta_1, \theta_2) \in \argmax_{x \text{ s.t. } x_1 + x_2 \leq 1} \big\{ x_1\theta_1\phi(v_1), x_2\theta_2\phi(v_2), 0\big\}.
\end{equation}
That is, the mechanism maximizes revenue if it allocates to the bidder with the highest \emph{quality-weighted} virtual value, provided it is positive. Quality-weighted virtual values can be negative only when the virtual values are negative, thus the auctioneer can implement such an allocation with an auction with a reserve price $R$ such that $\phi(R) = 0$. In particular, the allocation function of the \fpa when there exists a symmetric equilibrium in increasing bidding functions corresponds to $x(v_1,v_2)$, and thus the \fpa maximizes the auctioneer's revenue. 

The optimal reserve price has a stark effect on revenue: because the \fpa with optimal reserve is the optimal mechanism, its revenue $\pi_{1P}^R$ must be weakly larger than the revenue from a second-price auction without reserves:
$$\pi_{1P}^R \geq \pi_{2P}.$$
Together with \Cref{theorem:revenue_r0} this implies that 
\[\pi_{1P}^R \geq \frac{\pi_{1P}}{\mu_0},\]
that is, the \fpa without reserve may generate an arbitrarily small fraction of the optimal revenue.

\begin{remarking}\label{remarking:commitment}
The \fpa maximizes the auctioneer's revenue among mechanisms \emph{with and without} commitment. In other words, the optimal mechanism with commitment is implemented by a first-price auction, and moreover such a choice is sequentially rational. An auctioneer like the one we studied could have ``renegotiated'' the optimal timing of the auction at any instant $t$, deciding to stop early, but has no incentive to do so. The optimal reserve price reduces the implicit market-thickness-cost of information and does not require commitment to an optimal timing.
\end{remarking}

Perhaps surprisingly, in general the \spa instead cannot implement the optimal auction. This is because an auctioneer running a \spa cannot commit to stopping at infinity. Instead, the auctioneer sometime will have an incentive to stopping early, to capitalize on competition. To prove this, we construct an example in which the allocation induces by the optimally-stopped \spa is different from the optimal allocation function.
\begin{lemma}\label{lemma:SPAreserve}
Suppose that $X_t$ is a two-dimensional Poisson ``bad news'' model with arrival rates $\lambda(1-\theta_1)$ and $\lambda(1-\theta_2)$, with $\lambda>0$.
Let the auctioneer run a \spa with reserve price $R$. If $b_1, b_2 > 2R$, then $\tau^*_{2P,R}(b_1,b_2) = 0$. 
\end{lemma}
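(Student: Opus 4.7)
My plan is to define the auctioneer's revenue-from-stopping process $M_t$ in the Poisson bad-news structure, show that $M_t$ is a bounded supermartingale, and conclude via Doob's Optional Stopping Theorem that $\tau^\ast_{2P,R} = 0$ is optimal.

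I would first compute $M_t$ state-by-state, taking WLOG $b_1 \geq b_2$. In the bad-news model, at every $t$ each belief $\mu^i_t$ is either $0$ (if bad news about $i$ has arrived) or equals the common no-news posterior $\nu_t := p/[p + (1-p)e^{-\lambda t}]$. There are three nontrivial states: (a) both bidders active; (b) exactly one bidder active; (c) both dropped. In state (a) the quality-weighted scores $\nu_t b_i$ are ordered by bid, so bidder $1$ wins, and because $\mu^1_t = \mu^2_t$ the per-click second-price payment is $b_2$, which exceeds the reserve $R$; hence $M_t = \nu_t b_2$. In state (b) the survivor $i$ faces no competition and pays the reserve, so $M_t = R\nu_t$. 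In state (c), $M_t = 0$.

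The central step is verifying $\mathbb{E}_s[M_t \mid \mathcal{F}_s] \leq M_s$ for $s < t$. The useful identity is that, conditional on bidder $i$ being active at $s$, the probability $q_{t\mid s}$ that $i$ remains active at $t$ satisfies $\nu_t\, q_{t\mid s} = \nu_s$. In state (a) at time $s$, by independence of the two Poisson clocks the three continuation events (both, one, none active at $t$) have conditional probabilities $q_{t\mid s}^2$, $2 q_{t\mid s}(1-q_{t\mid s})$, and $(1-q_{t\mid s})^2$, yielding
\[
\mathbb{E}_s[M_t\mid \text{state (a)}]
= \nu_t q_{t\mid s}^2 b_2 + 2\nu_t q_{t\mid s}(1-q_{t\mid s}) R
= \nu_s\bigl[q_{t\mid s} b_2 + 2(1-q_{t\mid s}) R\bigr].
\]
This is at most $\nu_s b_2 = M_s$ precisely when $b_2 \geq 2R$. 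The factor of $2$, which comes from the two symmetric single-drop-out events, is exactly why the hypothesis $b_2 > 2R$ (rather than $b_2 > R$) is the right threshold. In states (b) and (c) an analogous computation yields martingales ($\mathbb{E}_s[M_t] = R\nu_s$ and $0$ respectively), so the supermartingale property holds globally.

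Since $M_t$ is bounded by $\overline v$, Doob's Optional Stopping Theorem gives $\mathbb{E}[M_\tau] \leq M_0$ for every stopping time $\tau$, with equality at $\tau = 0$; hence $\tau^\ast_{2P,R}(b_1, b_2) = 0$. The main obstacle is really just the state-(a) revenue accounting: one must pin down how the reserve interacts with the quality-weighted second-price rule so that the per-click payment in state (a) is simply $b_2$ rather than some reserve-distorted quantity; once that is fixed, the factor-of-2 driving the $2R$ threshold drops out of the binomial expansion, and the OST conclusion is routine.
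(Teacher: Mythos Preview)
Your proof is correct, and it takes a genuinely different route from the paper's. The paper casts the auctioneer's problem as a free-boundary problem: it derives the continuation-region ODE $V'(\mu)\mu = 2\bigl(V(\mu) - \mu R\bigr)$, solves it as $V(\mu) = K\mu^2 + 2R\mu$, uses continuous and smooth pasting to look for a stopping threshold $\overline{\mu}$, finds that the pasting conditions force $b_2 = 2R$, and then argues by contradiction that when $b_2 > 2R$ the putative continuation value $K\mu^2 + 2R\mu$ (with $K = b_2 - 2R$ pinned down by $V(1) = b_2$) cannot dominate the stopping value $\mu b_2$, so the stopping region must be all of $[0,1]$.

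Your approach bypasses the HJB machinery entirely: you compute the stopping-value process $M_t$ state by state, exploit the identity $\nu_t\, q_{t\mid s} = \nu_s$, and read off the supermartingale inequality directly from the binomial expansion of the two-bidder drop-out probabilities. This is more elementary and makes the origin of the $2R$ threshold completely transparent (the factor $2$ is the multiplicity of the single-drop-out event). It also avoids the somewhat delicate pasting-and-contradiction structure in the paper. The paper's HJB approach does buy something your argument does not: the same ODE immediately yields the continuation value $V(\mu) = (b_2 - 2R)\mu^2 + 2R\mu$ in the complementary regime $b_2 < 2R$, which is used later in the paper. Your supermartingale argument is sharp for the stated lemma but would need a separate construction to recover that additional information.
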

\begin{proof}
First, until news arrives, the belief process follows the ODE
\[\begin{cases}\dot{\mu}_t^i = \lambda \mu_t^i (1-\mu_t^i) \\
\mu_0^i = p\end{cases}\]
Because the belief is deterministic, and because we assumed symmetry, $\mu_t^1 = \mu_t^2$ until news arrives. We denote the belief prior to any arrivals by $\mu_t$.
Let without loss bidders submit bids $b_1 > b_2 > 2R$. Then the auctioneer's expected revenue from stopping at time $t$ (provided that no clock has ticked so far) is $\mu_t b_2$. Her value function at belief $\mu_t$ \emph{before any news has arrived} must satisfy 
\[V(\mu) = \max\big\{\mu_t b_2, V(\mu_{t+\Delta}) \big\}\]
for some small $\Delta > 0$. This problem can be cast as a free-boundary problem with respect to the infinitesimal generator of the belief process, which implies an interval stopping region of the form $[\overline{\mu},1]$. We will prove that $\overline{\mu} =0$. 

Suppose by contradiction that the auctioneer decided to continue at some belief $\mu>0$. Then her value function must satisfy\footnote{See \Cref{appendix} for a full derivation.}
\begin{align}\label{eq:free boundary ODE}
 V'(\mu)\mu &= 2\big(V(\mu) - \mu R\big).
\end{align}
By canonical continuous pasting arguments (see \citet{peskir2006optimal}) at belief $\overline{\mu}$ it must be that $V(\overline{\mu}) = \overline{\mu}b_2$. Smooth pasting implies instead that
\[b_2 \overline{\mu} = 2(\overline{\mu}b_2 -\overline{\mu}R )  \iff b_2= 2R.\]
When $b_2 > 2R$, there is no such belief $\overline{\mu}$. Thus, either the value function is a solution to \Cref{eq:free boundary ODE}, and it is everywhere larger than $\mu b_2$, or $V(\mu) = \mu b_2$ and the auctioneer stops immediately. 

Suppose by contradiction that the value function is a solutions to \Cref{eq:free boundary ODE}. Then, it must take the form $V(\mu) = K \mu^2 + 2R\mu$ for some $K$. To pin down $K$, note that if the belief has drifted all the way to $\mu = 1$, then the value the auctioneer can secure must be the second-highest bid, so $V(1) = b_2$.
Thus, it must be that $K = b_2 - 2R > 0$. We thus have a complete characterization of the value function, which is convex. But if this is the value function, it must pointwise weakly dominate the line $\mu b_2$, which is impossible. This is a contradiction, and it proves that the value function is $V(\mu) = \mu b_2$ when $b_2 > 2R$. The same argument can be adapted to show that $V(\mu) = (b_2 - 2R)\mu^2 + 2R\mu$ when $b_2 < 2R$, $b_2 < b_1$.

\end{proof}

The auctioneer's incentive to stop early improved her revenue with respect to a \fpa without reserve prices; instead, with reserves the \fpa dominates the \spadot.
\begin{theorem}\label{theorem:reserves}
The optimal mechanism can be implemented as a \fpa with reserves. Instead, there exist distributions $F$ such that no \spa with reserve implements the optimal mechanism.
\end{theorem}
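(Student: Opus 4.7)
The plan is to treat the two claims separately, using \Cref{lemma:FPAreserve} for the first half and \Cref{lemma:SPAreserve} together with a careful choice of $F$ for the second.

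For the first half, I would set the reserve to Myerson's reserve $R^{*}=\phi^{-1}(0)$. By \Cref{lemma:FPAreserve}, the platform delays to $\tau^{*}_{1P,R^{*}}=\infty$, so $\theta$ is perfectly revealed before allocation. Assuming a symmetric, strictly increasing equilibrium bidding function $\beta$ (whose existence adapts the argument of \Cref{proposition:fpa_continuity}, with the reserve tempering the extra shading term in the ODE), the auction at $\tau=\infty$ allocates to the bidder maximizing $\theta_i \beta(v_i)$ among those with $\beta(v_i)\geq R^{*}$. Monotonicity of $\beta$ and of $\phi$ (the latter from regularity of $F$) imply this coincides with $\argmax_i \theta_i \phi(v_i)$ restricted to positive values, which is exactly the Myerson-optimal allocation \eqref{eq:optimal auction}. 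A standard revenue-equivalence argument then yields the optimal expected revenue.

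For the second half I would work in the bad news Poisson model of \Cref{ex:bad news} and pick a regular $F$ whose Myerson reserve satisfies $2R^{*} < \overline{v}$ (e.g., a truncated exponential with mean one on a sufficiently large interval). For this $F$, no choice of reserve $R$ makes the \spa optimal. First, I verify that truthful bidding is an equilibrium of the \spa with any reserve: in the regime $b_1, b_2 > 2R$, \Cref{lemma:SPAreserve} gives immediate stopping at $\tau=0$, reducing the auction to a standard scored Vickrey where truth-telling is weakly dominant; in the complementary regime the auctioneer delays, and a direct check rules out profitable bid shading. Under truthful bidding, the positive-probability event $\{v_1 > v_2 > 2R,\,\theta_1=0,\,\theta_2=1\}$ produces a strict revenue gap: by \Cref{lemma:SPAreserve} the \spa stops at $\tau=0$ and allocates to bidder $1$, collecting zero (his click never materializes), while Myerson's mechanism allocates to bidder $2$ and collects $\phi(v_2) > 0$. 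No reserve can close this gap, because $R \neq R^{*}$ introduces the standard distortion of a mis-priced reserve, while $R = R^{*}$ still incurs the $\theta$-independent-allocation loss on the event above.

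The main obstacle is to rule out non-truthful equilibria of the \spa that might somehow implement the Myerson allocation. My plan is a revenue-equivalence-based contradiction: any equilibrium of the \spa that implements the Myerson allocation must also match the Myerson revenue, but any monotone bidding function over a distribution with $2R<\overline{v}$ must assign bids above $2R$ to a set of values of positive measure, and \Cref{lemma:SPAreserve} then forces the allocation on that set to be $\theta$-independent, contradicting Myerson's $\theta$-dependence. This closes the argument.
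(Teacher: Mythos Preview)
Your first half is fine and mirrors the paper: set $R^{*}=\phi^{-1}(0)$, invoke \Cref{lemma:FPAreserve} to obtain $\tau=\infty$, and observe that a symmetric increasing equilibrium then reproduces the allocation~\eqref{eq:optimal auction}.

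The second half has a genuine gap. The detour through truthful bidding is both unnecessary and shaky: even if you could verify that truthful reporting is a Bayes--Nash equilibrium (the paper in fact remarks after \Cref{theorem:reserves} that the \spa with reserves is \emph{not} truthful once timing becomes a screening device, and your ``direct check rules out profitable bid shading'' is not a check), establishing that \emph{one} equilibrium fails to implement Myerson does not show that \emph{no} equilibrium does. All of the work must be done by your final paragraph, and there the central claim is false. You assert that ``any monotone bidding function over a distribution with $2R<\overline{v}$ must assign bids above $2R$ to a set of values of positive measure.'' But a strictly increasing bidding function can be bounded above by $2R$; for instance $\beta(v)=2R\,v/\overline{v}$ on $[0,\overline{v}]$. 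Neither monotonicity nor the size of the support forces bids to exceed $2R$, so you cannot feed \Cref{lemma:SPAreserve} the hypothesis it needs.

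The paper runs the contradiction in the opposite direction, and that reversal is the missing idea. Suppose some symmetric increasing equilibrium of the \spa implemented the optimal allocation~\eqref{eq:optimal auction}. Since that allocation is $\theta$-dependent, the auctioneer must delay to $\tau=\infty$ on \emph{every} bid profile arising in equilibrium. By \Cref{lemma:SPAreserve}, $\tau=0$ whenever both bids exceed $2R$; hence the equilibrium bidding function must satisfy $\beta(v)\le 2R$ for all $v$. Now take a bidder of type $\overline{v}-\varepsilon$ and let him deviate to some $x>2R$. His opponent still bids at most $2R$, so (by the $b_2<2R$ branch at the end of the proof of \Cref{lemma:SPAreserve}) the auctioneer continues to delay; but now the deviator outbids every opponent type, in particular type $\overline{v}$ with $\theta_{-i}=1$, and pays at most $\beta(\overline{v})\le 2R<\overline{v}-\varepsilon$. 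This is a strictly profitable deviation, contradicting equilibrium. The logical flow is: optimal implementation $\Rightarrow$ bids bounded by $2R$ $\Rightarrow$ profitable upward deviation---precisely the reverse of what you attempted.
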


\begin{proof}
We have already shown that the \fpa with reserve $R$ such that $\phi(R) = 0$ implements the optimal mechanism. We are left to prove that the \spa cannot possibly implement the optimal mechanism for some distribution $F$.

Take again the setting of \Cref{lemma:SPAreserve} and fix a distribution $F$ over the support $[0,\overline{v}]$ such that $\overline{v} > 2R$ where $\phi(R) = 0$, and fix  $\varepsilon < \frac{\overline{v} - 2R}{2}$.
In order for the \spa to implement the optimal auction, its allocation function must be given by \Cref{eq:optimal auction}. In particular, this implies that it must be the case that  $\tau_{2P,R} = \infty$. However, from \Cref{lemma:SPAreserve} we know that the auctioneer has an incentive to stop early if both bidders bid above $2R$. Then, if there exists a symmetric equilibrium in increasing strategies of the second-price auction that implements the optimal auction, it must be that the strategies $\beta(\cdot)$ are bounded above by $2R$.

But then, an advertiser with type $\overline{v}-\varepsilon$ has an incentive to bid $x > 2R$ instead of $\beta(\overline{v}-\varepsilon) \leq 2R$: the auctioneer will run the auction at infinity, but the advertiser will now win the item even when his opponent of type $\overline{v}$ has quality score $\theta= 1$. He will pay $\beta(\overline{v}) < 2R$ and make a profit of $\overline{v} - \varepsilon - \beta(\overline{v}) > \overline{v}- \varepsilon - 2R> \overline{v} - \frac{\overline{v} - 2R}{2} - 2R  = \frac{\overline{v} - 2R}{2}>  0$.
\end{proof}

Now that time can be used as a screening device, the \spa is no longer truthful. The equilibrium of the \spa with reserves may be complex to characterize, but it cannot possibly implement the optimal auction. Compare the result to the previous remark:  the auctioneer requires the ability to commit to timing in order to implement the same stopping rule as the first-price auction. Under that same stopping rule, the induced allocation rule would be identical to the optimal mechanism's allocation in \Cref{eq:optimal auction}. However, the auctioneer has an incentive to anticipate the auction when bids are large relative to the reserve price, to capitalize on the current market thickness.

\section{Discussion}\label{sec:discussion}

\paragraph{Number of advertisers.} We purposely chose to study a model with two advertisers because it allows us to write quite the general results. However, it should be clear that the argument laid out in \Cref{lemma:SPAstopping} cannot be immediately replicated with more than two advertisers. 
In this section we trade off some degree of generality by specializing to the exponential ``bad-news'' model, but we relax the number of advertisers competing for the ad slot to $n>2$. The user then has type $\theta = (\theta_1, \dots, \theta_n)$, and the news process is an $n$-dimensional Poisson process with arrival rates $\lambda(1-\theta_i)$ on the $i$-th coordinate. Let the first arrival time of the Poisson process associated with advertiser $i$ be denoted by $\tau_i$.
We can show the following:
\begin{lemma}\label{lemma:multiple_advertisers}
For any realization of the exponential clocks $\tau_1, \dots, \tau_n$ the \spa optimally stops earlier than the \fpadot, i.e. 
\[\tau^*_{2P} \leq \tau^*_{1P} \quad \text{ pointwise.}\]
\end{lemma}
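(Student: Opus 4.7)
The cleanest route is to extend the argument of \Cref{lemma:FPAstopping} to $n>2$ advertisers and conclude that the FPA optimal stopping time remains infinite pointwise; the comparison with any SPA stopping rule is then immediate. I would begin by observing that in the Poisson bad-news model with $n$ advertisers, each belief process $\mu^i_t$ remains a bounded nonnegative $\mathbb{F}$-martingale (conditional on $\theta$, the posterior on $\{\theta_i=0\}$ versus $\{\theta_i=1\}$ is updated only through the arrival of advertiser $i$'s own exponential clock, and the Bayesian update preserves the martingale property). Consequently, for any fixed nonnegative bids $b_1,\dots,b_n$, each scaled process $\mu^i_t b_i$ is itself a bounded nonnegative martingale.

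The next step is to invoke Jensen's inequality for the convex function $x \mapsto \max_i x_i$ componentwise: for any $s\geq t$,
\[
\mathbb{E}_t\!\Big[\max_i \mu^i_s b_i \Big] \;\geq\; \max_i \mathbb{E}_t[\mu^i_s b_i] \;=\; \max_i \mu^i_t b_i.
\]
Hence the FPA revenue process $R^{FPA}_t = \max_i \mu^i_t b_i$ is a bounded nonnegative submartingale with respect to $\mathbb{F}$. By Doob's optional stopping theorem applied to this bounded submartingale, the expected revenue $\mathbb{E}_0[R^{FPA}_\tau]$ is non-decreasing in $\tau$, so the auctioneer never has a strict incentive to stop at any finite time. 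Following the convention adopted in \Cref{lemma:FPAstopping}, we set $\tau^*_{1P}(\omega)=+\infty$ pointwise, where the limit $\lim_{t\to\infty} R^{FPA}_t$ exists almost surely by the martingale convergence theorem applied to each coordinate (and hence to their max).

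Having established that $\tau^*_{1P}(\omega) = +\infty$ for every realization $\omega$ of the exponential clocks $\tau_1,\dots,\tau_n$, the conclusion follows trivially: any admissible SPA stopping time $\tau^*_{2P}$ is a random variable with values in $\Rex = \mathbb{R}_+\cup\{+\infty\}$, so $\tau^*_{2P}(\omega)\leq +\infty = \tau^*_{1P}(\omega)$ pointwise. I expect the only subtlety to be the first step: verifying that adding more advertisers does not break the martingale property of the individual belief processes, which is handled by noting that in the bad-news model each advertiser's clock is independent, so the Bayesian update for $\mu^i_t$ depends only on the history of advertiser $i$'s own arrival and continues to yield a martingale in the joint filtration. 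Once this is in place, the submartingale step and the pointwise comparison are routine, and the substantive content of the lemma --- namely that the SPA may stop strictly earlier than the FPA along some sample paths --- follows by contrast with \Cref{lemma:SPAstopping}, whose literal statement does not extend to $n>2$ because the minimum of martingales is no longer guaranteed to be a supermartingale.
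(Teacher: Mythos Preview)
Your core argument---each $\mu^i_t b_i$ is a bounded martingale, so their maximum is a bounded submartingale, hence $\tau^*_{1P}=+\infty$ and the weak inequality $\tau^*_{2P}\leq\tau^*_{1P}$ is trivial---is correct and is precisely how the paper handles the FPA side. The paper, however, does not stop there: it derives a \emph{finite} upper bound $\tau^*_{2P}\leq\inf\{t:K_t=2\}$ (where $K_t$ counts advertisers with $\mu^i_t\neq 0$), arguing that once only two competitors remain the auctioneer is back in the two-bidder setting of \Cref{lemma:SPAstopping} and stops immediately. It then shows via an HJB analysis for $K_t=3$ that the inequality can be strict. Your route establishes the stated weak inequality more cheaply, but it yields no information about $\tau^*_{2P}$ itself, which is the content the paper is really after.

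Your closing sentence contains a genuine error. The minimum of martingales \emph{is} always a supermartingale (apply Jensen with the concave function $\min$), so that is not the obstruction. The reason \Cref{lemma:SPAstopping} does not extend verbatim to $n>2$ is that the SPA revenue is no longer the minimum at all: with $n$ bidders it is the second-highest order statistic of $(\mu^1_t b_1,\dots,\mu^n_t b_n)$, and the second-order statistic of martingales is in general neither a sub- nor a supermartingale. This is exactly why the paper resorts to the $K_t=2$ reduction and the HJB computation rather than a one-line Doob argument on the SPA side.
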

\begin{proof}
To see this, first note that the revenue from the \fpa is a maximum of $n$ martingales, and thus is a sub-martingale. Let $K_t$ be the number of advertisers $i$ such that $\mu_t^i \neq 0$. 
Then, 
\[\tau_{1P}^* = \inf_t \{K_t |  K_t = 1\}.\]
That is, the auctioneer stops in the limit when there are at least two advertisers who are a good match. She stops earlier when there are less than two good matches and only one competitor remains in the race.\footnote{This is equivalent to a stopping time $\tau_{1P}^* = +\infty$, because when $K_t = 1$ the revenue process is a martingale.} 

Instead, the revenue from a \spa is no longer a super-martingale. There is a natural upper bound on the stopping time of the \spa, given by our previous results: suppose that the conversation has produced bad news for all but two advertisers. Then, the auctioneer finds herself in the same position as she was in before, and she will stop. Therefore, 
\[\tau^*_{2P} \leq \inf_t{K_t | K_t = 2}.\]
One can show that the inequality can be strict by constructing the Hamilton-Jacobi-Bellman equation for the stopping problem when $K_t = 3$. Up to relabeling, we have $b_1 \geq b_2 \geq b_3$, and so the value function in the continuation region is characterized by:
\[3V(\mu) = \dot{V}(\mu)\mu + \mu(b_2 + 2b_3).\]
The threshold stopping belief is $\overline{\mu} = 0$. If $b_2< 2b_3$, the value function is then $V(\mu) = \frac{b_2 - 2b_3}{2}\mu^3 + \frac{b_2 + 2b_3}{2}\mu$ and the auctioneer continues until bad news arrives (if ever). Otherwise, if $b_2 \geq 2b_3$, the value function is $V(\mu) = \mu b_2$ and the auctioneer stops immediately. Intuitively, if the lowest bid $b_3$ is sufficiently worse than the second-highest ($b_2$), continuing is suboptimal because the problem is sufficiently similar to a two-bidder auction. 
\end{proof}
The intuition for this result is similar to \Cref{lemma:SPAreserve}. There, instead of an additional bidder with his own quality $\theta_3$ we added a ``fake'' bidder, with bid $R$ and known quality equal to $1$. It seems only natural that if such a bidder was not sufficient to induce the auctioneer to stop at $t = \infty$, then a weakly worse bidder (such as one with an uncertain quality) will not suffice either.
From \Cref{lemma:multiple_advertisers} and the Revenue Equivalence Theorem (which again holds because of the exponential bad news assumption) immediately follows that the revenue generated by the \spa is larger than the revenue generated by a \fpadot. 

Our results hold also when the auctioneer has a prior about the advertisers' values. The \fpa with optimal reserve dominates the \spa with optimal reserve. To see this, we can show that the \fpa implements the optimal auction in this domain, and instead the \spa does not. The proof is a replica of the the one discussed in the previous section and is thus omitted.

\paragraph{Explicit cost of information.}
A natural question is how introducing an explicit cost of information acquisition would affect the results. Small information acquisition costs are particularly natural because they can also be interpreted as the instantaneous rate of departure of the conversational AI user. Essentially, extending the model in this direction relaxes the assumption that the conversation will eventually perfectly learn the ad qualities, and adds an incentive to avoid delay in the auctioneer's problem.
The analysis is instructive and we include it in \Cref{subsection:discount}. We again specialize to the exponential news model, and we find that many of our results are robust to time discounting.

\section{Conclusion}

In this paper, we investigate the effect of endogenous market thickness and how it influences optimal auction design. Two key countervailing incentives for the auctioneer determine when the auction should be held:
\begin{itemize}
\item Information Acquisition: Over time, signals about ad quality improve, allowing the auctioneer to make better allocation decisions.
\item Market Thickness: As quality signals improve, advertisers separate between more and less competitive. As the market thins, revenue deteriorates.
\end{itemize}
We prove that the \spa is most sensitive to market thickness:  the auction happens early when quality scores are very uncertain. In the \fpa instead, the auctioneer is most sensitive to information.  She wants to learn as much as possible before running the auction, but the bidders anticipate the auctioneer's behavior and shade their bids aggressively. The auctioneer generates more revenue by running a \spadot.
Interestingly, the introduction of reserve prices wipes the \spadot's advantage. The \fpa implements the optimal auction, while the \spa is not truthful and sometimes ends too early. 

We are motivated by the monetization problem of Generative AI providers, and the model tightly reflects our fundamental question. However, it turns out that the trade-off between competition and quality scores appears in a number of interesting economic settings. We propose two examples below of auctions with \emph{dynamic} scoring, where the winner selection procedure relies on factors other than the buyers' values, and such factors are dynamically updated.

\paragraph{Mergers \& Acquisitions.} Multiple firms are interested in purchasing a small business. In most competitive M\&A processes, buyers submit bids to acquire the business with some contingencies that will be resolved during the due diligence period.\footnote{ See \citet{marquardt2015role} and \citet{wangerin2019m}.} Such contingencies may result in contract termination.
While due diligence is costly, a natural question is whether bidders should be encouraged to perform due diligence \emph{before or after} the winner selection. Selecting a winner before due diligence makes the process more competitive, but exposes the business to the risk of a failed M\&A deal. Viceversa, allowing due diligence to take place before selecting the winner may thin out the pool of interested parties, thereby reducing competitive pressure.

\paragraph{Public Procurement.} Government agencies procure goods and services from suppliers in an auction. Regulators require that suppliers pass certain probity measures -- for example, money-laundering and criminal record checks.
Should the agencies front-load or back-load compliance verification when procuring from industry suppliers?
That is, should the agencies run their auction and then audit the winner, or should they audit all participants and only after choose the winner? 
EU directives only require a bidder self-declaration at auction time, and then demand full compliance checks to be performed on the winner.\footnote{Directive 2014/24/EU, art. 57-61.} Again, by performing compliance checks after the auction, the agencies take advantage of market thickness to procure at low prices, at the risk of unfulfilled contract. Instead, running the auction after thorough checks would ensure contract fulfillment but would possibly increase prices. 
\vspace{1em}

Our theory has interesting implications for ad auction design. In light of the rapid adoption of interactive systems where platforms can dynamically learn the preferences of users, platforms that are planning to monetize using ad-auctions should think carefully about the dynamic forces we highlight when selecting an auction format.

% Bibliography
\bibliographystyle{ACM-Reference-Format}
\bibliography{references}

% Appendix
\appendix

\section{Extensions}\label{section:extensions}
In this appendix we explore two natural extensions to the baseline model. First, we show that our main insights carry over to a model where information acquisition has a direct cost, in addition to the indirect, market-thickness, cost. 

\subsection{Costly Information Acquisition ($r>0$)}\label{subsection:discount}
The problem of the auctioneer becomes more complex when she faces two separate costs of information. As before, she pays for information indirectly, by reducing market thickness through delay. Additionally, now information is costly in and of itself, because the user may leave the platform at a rate $r>0$.

This additional force seems to push the auctioneer toward less delay. However, the same discounting applies to the advertisers, who may be tempted to shade their bids less. Receiving the allocation in the future is less valuable in this world, so they may want to try and influence the timing of allocation by reporting different bids. We begin by showing that such incentives are not sufficient to change the auctioneer's decision in the optimally-stopped \spadot.
\begin{proposition}\label{proposition:SPAstopped_r}
The optimal exercise policy for the \spa is $\tau^*_{2P} = 0$.
\end{proposition}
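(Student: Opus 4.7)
The plan is to show that the discounted SPA revenue process remains a super-martingale, so that Doob's Optional Stopping Theorem once again gives the immediate-stopping conclusion, exactly as in the $r=0$ case handled by \Cref{lemma:SPAstopping}. Concretely, for any fixed bid profile $(b_1, b_2)$ the auctioneer's discounted payoff from stopping at $\tau$ is $e^{-r\tau} M_\tau$, where $M_t \coloneqq \min\{\mu^1_t b_1, \mu^2_t b_2\}$; I want to argue that $Z_t \coloneqq e^{-rt} M_t$ is a non-negative super-martingale under the auctioneer's filtration $\mathbb{F}$.

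First, I would recall from the proof of \Cref{lemma:SPAstopping} that $M_t$ itself is a super-martingale, since it is the minimum of two bounded martingales. Second, since $e^{-rt}$ is a deterministic non-increasing function, for any $s \geq t$ one has
\[
\mathbb{E}_t[Z_s] \;=\; e^{-rs}\, \mathbb{E}_t[M_s] \;\leq\; e^{-rs}\, M_t \;\leq\; e^{-rt}\, M_t \;=\; Z_t,
\]
where the first inequality uses the super-martingale property of $M_t$ and the second uses $M_t \geq 0$ together with $e^{-rs} \leq e^{-rt}$. Hence $Z_t$ is a super-martingale. Third, $Z_t$ is bounded (by $\max\{b_1, b_2\}$), so Doob's Optional Stopping Theorem applies and yields $\mathbb{E}_0[Z_\tau] \leq Z_0$ for every stopping time $\tau$, with equality at $\tau = 0$. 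Therefore $\tau^*_{2P} = 0$ is optimal pointwise in $(b_1, b_2)$.

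I do not anticipate a genuine obstacle: the only subtlety is noting that the argument is pointwise in the bids, so the introduction of $r>0$ does not change the auctioneer's best response to any reported profile, even though bidders' equilibrium strategies could in principle depend on $r$. Since the optimal exercise time is independent of the reports, the \spa remains a degenerate stopping problem, preserving truthfulness of the bidding stage as a corollary. The main conceptual point — worth emphasizing in the prose — is that discounting only strengthens the super-martingale property that already drove the $r=0$ result, so adding an explicit cost of information cannot overturn the market-thickness intuition of \Cref{lemma:SPAstopping}.
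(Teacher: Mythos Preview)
Your proposal is correct and follows essentially the same route as the paper's own proof: the paper simply observes that a discounted non-negative super-martingale is again a super-martingale and invokes Doob's Optional Stopping Theorem, exactly as you do. Your write-up is in fact more explicit than the paper's one-line argument, but the underlying idea is identical.
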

\begin{proof}
For any pair of reports $v_1, v_2$ the auctioneer faces the following optimization problem:
\[\max_{\tau} \mathbb{E}_0\Big[ e^{-r\tau} \min\{\mu^1_\tau b_1, \mu^2_\tau b_2\}\Big]\]
We know from \Cref{lemma:SPAstopping} that $\min\{\mu^1_\tau b_1, \mu^2_\tau b_2\}$ is a super-martingale, and the auctioneer here is trying to maximize a discounted super-martingale, which is itself a super-martingale. The result then follows exactly as in \Cref{lemma:SPAstopping}.
\end{proof}

Because this result again holds uniformly over reports, the \spa simply allocates at time $0$ to the highest bidder. This result is independent of the discount factor, which is particularly striking: the auctioneer allocates immediately, and is unwilling to take any risk with respect to the realization of the quality process. In a world in which competition may suddenly collapse (as it happens when one of the clocks ticks) the \spa hedges against such risk by sacrificing efficiency. With probability $1-\mu_0$ the auctioneer will not generate any revenue, but with probability $\mu_0$ she will do so at a time in which competition is strong. Finally, note that because the \spa allocates at a time independent of the reports of the advertisers, advertisers have no incentive to misreport, and the auction is truthful. 

Next, we describe the optimal exercise policy of the \fpa as a function of the advertisers' bids.
First note that the auctioneer will never stop after $\min\{ \tau_1, \tau_2 \}$. In particular, suppose the auctioneer has not stopped yet at $\tau_1 < +\infty$, when the clock of advertiser $1$ ticks. Then the auctioneer will stop exactly at $\tau_1$, because the auctioneer's expected revenue at time $t>\tau_1$ is given by $\mathbb{E}_{\tau_1} \big[ e^{-rt}\mu_t^2 b_2\big]$ (which is a super-martingale) since advertiser $1$ is out of the race. 
Thus, we only need to consider the case in which no news has arrived yet. In this case, the state space is completely characterized by a single number $\mu$, because $\mu_t^1 = \mu_t^2$ before the arrival of news. 
The auctioneer's value at beliefs $\mu^1_t = \mu^2_t = \mu$ solves the Hamilton-Jacobi-Bellman equation below. Its derivation is in \Cref{appendix}.

\begin{align*}
    \underbrace{V'(\mu)\lambda\mu(1-\mu)}_{\text{gain from delay}} = \overbrace{r V(\mu)}^{\text{direct cost of delay}} + \underbrace{(1-\mu)\lambda}_{\text{clock ticks}}\Big(\overbrace{V(\mu) - \mu b_1}^{\text{cost if ad $2$ is bad}} + \overbrace{V(\mu) - \mu b_2}^{\text{cost if ad $1$ is bad}}\Big)
\end{align*}

The terms in the HJB highlight the balance between benefits and costs of information. The left-hand side represents the increase in continuation value induced by a marginal delay ---since beliefs are drifting upwards, if the clock deosn't tick, the expected revenue increases. The right-hand side represents the decrease in continuation value over the marginal delay --- first because of information cost (in terms of discounting) and second because a tick of the clock shatters competition.
To find out when the auctioneer will run the auction, without loss we set $b_1 \geq b_2$. By requiring continuous pasting on the stopping boundary we find that, if no clock has ticked yet, the auctioneer will stop and run the auction when the beliefs $\mu^1_t = \mu^2_t$ reach the threshold 
$\overline{\mu} = \max\big\{1-\frac{r}{\lambda}\frac{b_1}{b_2}, \mu_0 \big\}$.
That is, unless a clock ticked the auctioneer runs the auction at 
\[\tau^{no-news}(b_1,b_2) = \begin{cases}
 \frac{1}{\lambda}\log\Big(\frac{\lambda}{r}\frac{\min_i b_i}{\max_i b_i} - 1\Big)  \quad \text{ if } 1-\frac{r}{\lambda}\frac{\max_i b_i}{\min_i b_i} > \mu_0\\
0 \hspace{8.7em} \text{otherwise.}
\end{cases}\]
and allocates to the highest bidder. From this equation we glean some insight into the trade-offs faced by the auctioneer. First, the auctioneer runs the auction with the largest delay when both bidders' bids coincide. Intuitively, when the bids are close to each other, the auctioneer has little to lose from one of the advertisers dropping out of the race. Instead, she wants to anticipate the auction when one bid is substantially larger than the other, because her revenue is now more at risk. This can be seen by noting that the threshold $\overline{\mu}$ is decreasing in the ratio $\frac{b_1}{b_2}$.

This concludes the characterization of the optimal stopping time for the \fpa, which we report below.
\begin{proposition}\label{proposition:FPAstopped_r}
The optimal exercise policy for the \fpa is 
\[ \tau^*_{1P}(b_1,b_2) = 
\min \Big\{ \tau^{no-news}(b_1,b_2),\tau_1,\tau_2 \Big\}
\]
\end{proposition}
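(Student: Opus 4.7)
The plan is to characterize the stopping time in two phases, exploiting the simple structure of the exponential bad-news process.

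First, I would show that the auctioneer stops no later than $\min\{\tau_1,\tau_2\}$. Suppose clock $i$ ticks at some $\tau_i<+\infty$ without the auctioneer having stopped. Then $\theta_i=0$ is revealed and the continuation value from time $t\geq\tau_i$ is determined by $e^{-rt}\mu^{-i}_t b_{-i}$. Because $\mu^{-i}_t$ is a bounded martingale and $e^{-rt}$ is deterministic and strictly decreasing, this product is a super-martingale, and Doob's Optional Stopping Theorem implies that it is optimal to stop immediately at $\tau_i$. Hence $\tau^*_{1P}(b_1,b_2)\leq \min\{\tau_1,\tau_2\}$ almost surely, and the continuation payoff just after a tick of clock $i$ equals $\mu^{-i}_{\tau_i}b_{-i}$.

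Second, I would analyze the pre-news problem. Before any clock ticks, symmetry of the Poisson model gives $\mu^1_t=\mu^2_t=\mu_t$, where $\mu_t$ solves the deterministic logistic ODE $\dot\mu=\lambda\mu(1-\mu)$ with $\mu_0=p$, so the problem reduces to a one-dimensional optimal stopping problem in the scalar $\mu$. Taking $b_1\geq b_2$ without loss, the stopping payoff is $\mu b_1$, while by the first phase a tick of bidder $2$'s clock yields immediate stopping value $\mu b_1$ and a tick of bidder $1$'s clock yields immediate stopping value $\mu b_2$. The HJB in the continuation region is then
\[r V(\mu)\;=\;V'(\mu)\lambda\mu(1-\mu)\;+\;\lambda(1-\mu)\big(\mu b_1-V(\mu)\big)\;+\;\lambda(1-\mu)\big(\mu b_2-V(\mu)\big),\]
which is equivalent to the equation displayed in the text.

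Third, I would locate the free boundary $\overline{\mu}$ by imposing value matching $V(\overline\mu)=\overline\mu\, b_1$ and smooth pasting $V'(\overline\mu)=b_1$ at the switching point from ``continue'' to ``stop''. Substituting these into the HJB and simplifying yields $\lambda(1-\overline\mu)b_2=r b_1$, so $\overline\mu=1-\tfrac{r}{\lambda}\tfrac{b_1}{b_2}$ whenever this quantity exceeds $\mu_0$; otherwise the initial belief already lies in the stopping region and $\tau^{no-news}=0$. Inverting the logistic ODE at the first hitting time of $\overline\mu$ produces the explicit formula for $\tau^{no-news}(b_1,b_2)$ given in the text.

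Combining the two phases, the optimal stopping time is the minimum of the pre-news target $\tau^{no-news}$ and the first clock tick $\min\{\tau_1,\tau_2\}$, exactly as claimed. The main obstacle is verifying that the candidate $V$, obtained from the HJB solution on the continuation region $[0,\overline\mu)$ and from the stopping payoff $\mu b_1$ on $[\overline\mu,1]$, coincides with the true value function. This follows from the standard verification theorem for one-dimensional optimal stopping in Markov models \citep{peskir2006optimal}: once $V$ satisfies value matching and smooth pasting at $\overline\mu$ and dominates the stopping payoff pointwise, the process $e^{-rt}V(\mu_t)$ is a super-martingale that becomes a martingale on the continuation region, which identifies $V$ with the Snell envelope and makes the proposed policy optimal.
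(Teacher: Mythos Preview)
Your proposal is correct and mirrors the paper's own argument almost exactly: the paper establishes the super-martingale bound after the first tick, reduces the pre-news problem to a one-dimensional HJB in $\mu$, and pins down the threshold $\overline{\mu}=1-\tfrac{r}{\lambda}\tfrac{b_1}{b_2}$ before stating the proposition as a summary of that discussion. Your write-up is slightly more explicit than the paper's in that you invoke smooth pasting (the paper phrases it only as ``continuous pasting'') and you add a verification step via \citet{peskir2006optimal}; both are welcome clarifications but do not change the route.
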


Unfortunately, computing the equilibrium of the \fpa requires aggregating exercise policies over the set of possible reported pairs of bids. This is a largely intractable problem, even for simple value distributions.
We prove however that when the probability of the user leaving is sufficiently small, the revenue from a \spa dominates the revenue from a \fpa. In other words, when the advertisers and the auctioneer become more patient (or the beliefs converge sufficiently fast to the truth) the \fpa allocates almost efficiently, and thus its revenue cannot possibly be close to the revenue of the inefficient \spa.

\begin{proposition}\label{proposition:fpa_continuity}
For every discount rate $r \geq 0$, there exists a pure strategy equilibrium for the \fpa game. Moreover, for $r=0$ the unique equilibrium is symmetric.
\end{proposition}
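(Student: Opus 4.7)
The statement splits naturally into two parts requiring different techniques. For $r=0$, by \Cref{lemma:FPAstopping} the auctioneer stops uniformly at infinity and the \fpa game reduces to an essentially static auction in which the closed-form $\beta_{FPA}$ derived in the text is the natural candidate symmetric equilibrium. For $r>0$, the stopping time $\tau^*_{1P}(b_1,b_2)$ of \Cref{proposition:FPAstopped_r} depends non-trivially on the bid profile, so no explicit construction is available and existence must be obtained abstractly.

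For $r=0$, I would first verify that $\beta_{FPA}$ is indeed an equilibrium. The first-order argument in the text already yields it as a stationary point of each bidder's best-response problem, and strict concavity of the interim payoff $[\mu_0 F(\beta_{FPA}^{-1}(b))+(1-\mu_0)](v-b)$ in $b$ (which follows from the smoothness and strict monotonicity of $\beta_{FPA}$, together with regularity of $F$) certifies it as a global maximizer, not merely a critical point. For uniqueness among all equilibria, I would follow the standard route for symmetric first-price auctions: any equilibrium must consist of continuous, strictly increasing bidding functions, since atoms admit profitable $\varepsilon$-undercutting and gaps admit profitable downward deviations. Any pair of equilibrium strategies must then solve a coupled system of FOC ODEs on $[0,\overline v]$ with common boundary condition at the bottom of the support, and a Maskin--Riley-style monotonicity argument exploiting the cross-bidder symmetry of $F$ forces the two inverse bid functions to coincide, collapsing the system to the single ODE whose unique solution is $\beta_{FPA}$.

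For $r>0$, I would invoke Reny's theorem on existence of pure strategy equilibria in discontinuous games, applied to the Bayesian game whose strategies are measurable bidding functions $[0,\overline v]\to[0,\overline v]$. Payoffs are bounded and upper semi-continuous; the only discontinuities arise at bid ties. The key hypothesis of better-reply security is verified by the standard \fpa perturbation: whenever the expected payoff is discontinuous at a profile, a bidder can slightly increase his bid to secure a payoff arbitrarily close to the upper semi-continuous envelope. The endogenous stopping time $\tau^{no-news}(b_1,b_2)$ is continuous on the region where it is strictly positive, and the additional one-sided discontinuity at the boundary $1-\frac{r}{\lambda}\frac{\max_i b_i}{\min_i b_i}=\mu_0$ is compatible with the upward-perturbation direction used in the better-reply security check.

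The main obstacle is handling the interplay between the tie discontinuity in the allocation and the discontinuous collapse of $\tau^{no-news}$ to zero, which together could in principle break better-reply security. I would address this by decomposing the interim expected payoff according to the four realizations of $(\theta_1,\theta_2)$: in the $(1,1)$ component the payoff depends on $\tau^{no-news}$ only through a discount factor $e^{-rT}$ while the winning indicator depends only on the bid comparison; in the mixed components the stopping time is almost surely the dropout clock of the low-quality bidder, so $\tau^{no-news}$ enters only through the probability that it precedes that exponential clock. Summing across cases makes it possible to check that upward perturbations of the bid yield payoffs converging to the upper semi-continuous envelope uniformly on the relevant domain, completing the verification of better-reply security and hence the existence of a pure strategy equilibrium.
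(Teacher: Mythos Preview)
Your existence argument for general $r\ge 0$ is essentially the paper's: both reduce the dynamic \fpa to a static pay-your-bid game and invoke \citet{reny99}. The paper is terser---it simply observes that $\tau^{no-news}(b_1,b_2)$ is continuous in the bids and that payoffs are therefore continuous off the diagonal $b_1=b_2$, which is exactly the structure of Reny's Example~5.2. Your worry about a ``discontinuous collapse of $\tau^{no-news}$ to zero'' at the boundary $1-\tfrac{r}{\lambda}\tfrac{\max_i b_i}{\min_i b_i}=\mu_0$ is unfounded: at that boundary the stopping threshold $\overline\mu$ equals $\mu_0$, so the waiting time tends to zero continuously and no extra discontinuity appears. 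The decomposition over realizations of $(\theta_1,\theta_2)$ is thus unnecessary, though harmless.

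For uniqueness at $r=0$ the paper takes a genuinely different route. Rather than verifying $\beta_{FPA}$ directly and then running a Maskin--Riley ODE argument, it observes that when $\tau^{no-news}=\infty$ the allocation depends only on the \emph{ranking} of the bids, so the auction is rank-based with bid-based payments in the sense of \citet{chawla_hartline_2013}; their Theorems~3.1 and~4.6 then deliver uniqueness and symmetry in one stroke, covering mixed as well as pure equilibria. Your approach can be made to work, but two steps need care. First, the claim that the interim payoff $[\mu_0 F(\beta_{FPA}^{-1}(b))+(1-\mu_0)](v-b)$ is \emph{strictly concave} in $b$ does not follow from regularity of $F$ alone; the standard verification instead changes variables to the ``reported type'' and uses a single-crossing argument to show the first-order condition identifies the global optimum. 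Second, your uniqueness sketch implicitly restricts to pure strategies once you pass to ODEs; to match the paper's conclusion you would need a preliminary step ruling out mixing (no atoms, no gaps) before the Maskin--Riley comparison applies. The Chawla--Hartline citation buys both of these for free.
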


\begin{proof}
Observe that the our dynamic \fpa, can be summarized as pay your bid auction where the payoff for bidder $i$ with valuation $v_i$ and given bids $b_1, b_2$ is $u_i(v_i,b_i,b_{-i} = \mu_0 \cdot (v_i-b_i) x_i(b_i,b_{-i}|r)$ where
\begin{align*} x_i(b_i,b_{-i}|r) =  \mathbb{E}_0 \big[ & \mathbf{1}_{\{\tau_{-i}\geq \tau^{no-news}(b_i,b_{-i}|r)\}} e^{-r \tau^{no-news}(b_i,b_{-i}|r)} \big(\mathbf{1}_{\{b_{i}>b_{-i}\}}  +\frac 1 2\cdot + \mathbf{1}_{\{b_i=b_{-i}\}}\big) \\
& +  \mathbf{1}_{\{\tau_{-i}\geq \tau^{no-news}(b_{i},b_{-i}|r)\}} e^{-r\tau_{-i}} \big].
\end{align*}

Observe that for bids $b_i\neq b_{-i}$ the payoff is continuous on $b_i$ as the clock $\tau_1,\tau_2$ are independent of the bids and the optimal policy $\tau^{no-news}(b_i,b_{-i}|r)$ is continuous on $b_i$. Therefore, the same proof of the multi-unit pay your bid auction in Example 5.2 in \citet{reny99} applies to our setting which guarantees the existence of a pure strategy equilibrium. 

For $r=0$, notice that $\tau^{no-news}(b_1,b_2|r)=\infty$, and therefore the allocation rule of bidder $i$ only depends on the ranking of the bids but not on the value of the bids. Thus, using the language of \citet{chawla_hartline_2013}, the auction is a rank-based allocation rule. Moreover, the first price nature of the auction implies that it satisfies the bid-based payment and win-vs-tie-strict properties required for Theorems 3.1 and 4.6 in \citet{chawla_hartline_2013} which imply that there is a unique equilibrium for the game and bidders use the same pure strategy bidding function.
\end{proof}

The key idea of the existence proof relies on the continuity of $\tau^*_{1P}(b_1,b_2)$ on the bids for a fixed discount rate $r$. In some sense the dynamic \fpa then inherits the same continuity properties of the static \fpa, which is sufficiently continuous to apply the machinery of \citet{reny99}.\footnote{See for instance the paper's Example 5.2, that shows that for static first-price auctions there exists a pure strategy equilibrium with strictly increasing bidding strategies.}

To obtain equilibrium uniqueness for the case when $r=0$, we leveraged Theorems 3.1 and 4.6 in \citet{chawla_hartline_2013}. However, a key assumption for their results is that the allocation depends only on the ranking of the bids. When $r>0$ instead the auction timing is bid-dependent, and thus from an ex-ante perspective bids affect the allocation rule. This is not the case when $r=0$, allowing us to conclude uniqueness and symmetry of equilibrium in that case.

To show that that the equilibrium is continuous as function of the discount rate, we consider the mixed extension of the auction game.\footnote{We refer to Chapter 2 of \citep{parthasarathy2005probability} for textbook treatment on the space of Borel measures and the weak topology on it.}  We denote the Borel measures on $[0,1]^2$ by $\mathcal{B}([0,1]^2)$ so that a strategy $B\in \mathcal{B}([0,1]^2)$ corresponds to a probability measure $(v,b)\sim B$ with $\Pr_B[v'\leq v] = F(v)$ for all $v\in [0,\overline v]$ (almost surely). For example, a pure strategy $b(v)$ has mixed strategy representation $B_{b}$ where $\Pr_{B_{b}}[(v,b): b= b(v)] = 1$. 

We endow $\mathcal{B}([0,1]^2)$ with the weak topology, so that $B_n \Rightarrow B$ if for all continuous function $f:[0,1]^2\to R$ we have that $\int f dB_n \to \int f dB$. Because $\mathcal{B}([0,1]^2)$ is compact and the subsspace of mixed strategies is closed, we conclude that the subspace of mixed bidding strategies is compact.

\begin{claim}\label{claim1}
Given a discount rate $r\geq 0$, consider an equilibrium $(B^*_1(r),B^*_2(r))$. Then the probability of having a tie on the equilibrium is zero. 
\end{claim}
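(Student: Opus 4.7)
The plan is to argue by contradiction. Suppose in equilibrium $\Pr[b_1 = b_2] > 0$. Since under $(B^*_1, B^*_2)$ the bid realizations are drawn independently, this forces both marginal bid distributions to share a common atom at some level $b^* \in [0, \overline{v}]$: $\alpha_i := \Pr_{B^*_i}[b = b^*] > 0$ for $i = 1, 2$. By individual rationality any bidder bidding $b^*$ has $v \geq b^*$, and because $F$ is regular (in particular atomless), $F(\{b^*\}) = 0$. Hence the atom $\alpha_1 > 0$ must be contributed by a positive-$F$-measure set of types $v > b^*$ placing positive conditional probability on the bid $b^*$. Fix any such type $v$ for bidder $1$; I will exhibit a strictly profitable deviation, contradicting the best-response condition.

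The deviation is upward: replace the bid $b^*$ by $b^* + \varepsilon$ for small $\varepsilon > 0$. Writing
\[
\Delta(\varepsilon) \;=\; \mu_0(v - b^*)\bigl(\mathbb{E}_{B^*_2}[x_1(b^*+\varepsilon, b_2|r)] - \mathbb{E}_{B^*_2}[x_1(b^*, b_2|r)]\bigr) \;-\; \mu_0 \,\varepsilon\, \mathbb{E}_{B^*_2}[x_1(b^*+\varepsilon, b_2|r)],
\]
I decompose the allocation change by conditioning on $b_2$. On the event $b_2 \neq b^*$, the indicator $\mathbf{1}_{\{b_1 > b_2\}}$ is locally constant and $\tau^{\mathrm{no\text{-}news}}(b_1, b_2|r)$ from \Cref{proposition:FPAstopped_r} is continuous in $b_1$, so $x_1(b^* + \varepsilon, b_2|r) \to x_1(b^*, b_2|r)$ as $\varepsilon \to 0$. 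On the event $b_2 = b^*$ (which has probability $\alpha_2 > 0$), the tie-breaking jumps bidder $1$'s winning weight from $\tfrac{1}{2}$ to $1$, so the change tends to the positive constant
\[
C \;:=\; \tfrac{1}{2}\,\mathbb{E}_0\!\left[\mathbf{1}_{\{\tau_2 \geq \tau^{\mathrm{no\text{-}news}}(b^*, b^*|r)\}}\, e^{-r\,\tau^{\mathrm{no\text{-}news}}(b^*, b^*|r)}\right] \;>\; 0,
\]
which is strictly positive because $\tau^{\mathrm{no\text{-}news}}(b^*, b^*|r)$ is finite (it coincides with the stopping threshold obtained from the HJB analysis).

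By dominated convergence (allocations are bounded by $1$), the conditional pieces aggregate to
\[
\liminf_{\varepsilon \to 0^+}\bigl(\mathbb{E}_{B^*_2}[x_1(b^*+\varepsilon, b_2|r)] - \mathbb{E}_{B^*_2}[x_1(b^*, b_2|r)]\bigr) \;\geq\; C\,\alpha_2 \;>\; 0.
\]
Since $v > b^*$, the first term of $\Delta(\varepsilon)$ is bounded below by $\tfrac{1}{2}\mu_0 (v - b^*) C \alpha_2$ for all sufficiently small $\varepsilon$, while the second term is $O(\varepsilon)$. Hence $\Delta(\varepsilon) > 0$ for small enough $\varepsilon$, contradicting the optimality of bidding $b^*$ for type $v$. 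We conclude that no common atom can exist and $\Pr[b_1 = b_2] = 0$.

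The main technical obstacle is handling the two regimes of $x_1$ simultaneously: continuity in bids away from the diagonal (delivered by the explicit formula for $\tau^{\mathrm{no\text{-}news}}$ in \Cref{proposition:FPAstopped_r}) and the discrete jump at the tie. Once the jump is bounded below by a strictly positive constant independent of $\varepsilon$, the usual bid-shading argument for first-price auctions closes the proof.
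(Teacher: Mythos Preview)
Your argument is correct and matches the paper's approach: both locate a common atom $b^*$, note that the tying types satisfy $v>b^*$, and construct the standard upward deviation $b^*+\varepsilon$, using continuity of $\tau^{\mathrm{no\text{-}news}}$ in the bids to show that the only first-order change in payoff is the discrete jump from $\tfrac12$ to $1$ at the tie. One small caveat: your justification that $C>0$ ``because $\tau^{\mathrm{no\text{-}news}}(b^*,b^*\mid r)$ is finite'' fails at $r=0$ (there $\tau^{\mathrm{no\text{-}news}}=\infty$), but the conclusion $C>0$ still holds since in that limit $e^{-r\tau^{\mathrm{no\text{-}news}}}=1$ and $\Pr[\tau_2=\infty]=\mu_0>0$.
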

\begin{proof}
Suppose that by contradiction that there are ties on the equilibrium. So consider a type $v_i$ on where there is a tie. By equilibrium optimality we must have that $\Pr_{B^*_i(r)}[(v_i,b): b\geq v_i]=0$: else, the bidder obtains a payoff of zero while by bidding small $\epsilon$ obtains a positive payoff since it wins the object with positive probability whenever the other bidder has a valuation $v_{-i}\leq \epsilon$. Since type $v_i$ is bidding less that its value, if there is a tie it can raise its bids by a small $\epsilon$. By doing that it removes the ties and gets a mass increase on their payoff, and because $\tau^*_{1P}$ is continuous the time allocation changes smoothly so that would be a profitable deviation. Therefore, there are no ties on the equilibrium.
\end{proof}

\begin{claim}\label{claim2}
Consider $r_n \to 0$ and equilibrium bids $(B^*_1(r_n),B^*_2(r_n))$ then $(B^*_1(r_n),B^*_2(r_n)) \Rightarrow (B_{b^*},B_{b^*})$, where $b^*(\cdot)$ is the symmetric bidding strategy in the \fpa game when $r=0$. Furthermore, if $B^*_i(r_n)$ is a a pure strategy for all $n$, then $b^*(v|r_n)\to b^*(v|0)$ for all $v\in [0,\overline v]$ (a.s.).
\end{claim}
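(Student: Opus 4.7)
The plan is to run a standard compactness-plus-uniqueness argument establishing upper hemicontinuity of the equilibrium correspondence as $r\to 0$. Using the compactness of the space of mixed bidding strategies in the weak topology (noted in the paragraph preceding the claim), I would extract from any subsequence of $r_n\to 0$ a further convergent sub-subsequence $(B^*_1(r_{n_k}), B^*_2(r_{n_k})) \Rightarrow (B_1^\infty, B_2^\infty)$. The main task is then to show that the limit $(B_1^\infty, B_2^\infty)$ is itself an equilibrium of the $r=0$ game, at which point the uniqueness statement in \Cref{proposition:fpa_continuity} identifies it with $(B_{b^*}, B_{b^*})$.

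For the upper hemicontinuity step, the key observation is that the allocation probability $x_i(b_i,b_{-i}|r)$ is jointly continuous in $(b_i,b_{-i},r)$ on the set $\{b_i\neq b_{-i}\}\times\mathbb{R}_+$: the no-news threshold $\tau^{no-news}(b_i,b_{-i}|r)$ in the closed form preceding \Cref{proposition:FPAstopped_r} depends continuously on $(b_i,b_{-i},r)$, and although $\tau^{no-news}\to\infty$ as $r\to 0$, the factor $e^{-r\tau^{no-news}}$ stays bounded in $[0,1]$ and converges to $1$, so $x_i(\cdot,\cdot|r)\to x_i(\cdot,\cdot|0)$ pointwise off the diagonal. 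Because \Cref{claim1} rules out ties in any equilibrium, the diagonal $\{b_i=b_{-i}\}$ carries zero mass under both $B^*_i(r_n)\times B^*_{-i}(r_n)$ and its weak limit. A dominated convergence argument (payoffs are uniformly bounded by $\mu_0\overline{v}$) then yields convergence of expected payoffs, and a Reny-style best-response-preservation argument \citep{reny99}, already invoked in \Cref{proposition:fpa_continuity}, certifies that no deviation is profitable against $(B_1^\infty, B_2^\infty)$ in the $r=0$ game.

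By \Cref{proposition:fpa_continuity} the $r=0$ game has a unique equilibrium $(B_{b^*}, B_{b^*})$, so every weakly convergent subsequence of $(B^*_1(r_n), B^*_2(r_n))$ has this common limit and the subsequence principle yields weak convergence of the full sequence. For the ``furthermore'' clause, suppose each $B^*_i(r_n)=B_{b^*(\cdot|r_n)}$ is a pure strategy. Equilibrium bidding strategies in this first-price auction are monotone increasing in $v$ (from \Cref{claim1} and a standard single-crossing argument), and the established weak convergence means the distributions of $b^*(v|r_n)$, with $v\sim F$, converge in distribution to the deterministic function $b^*(v|0)$, which is continuous and strictly increasing on $[0,\overline{v}]$. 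A Helly selection argument applied to monotone functions, combined with continuity of the limit $b^*(\cdot|0)$, upgrades convergence in distribution to pointwise convergence at every continuity point of the limit, yielding $b^*(v|r_n)\to b^*(v|0)$ for $F$-almost every $v\in[0,\overline{v}]$.

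The main technical obstacle will be the joint-continuity and dominated-convergence step: although the closed-form expression for $\tau^{no-news}$ makes pointwise continuity away from the diagonal transparent, the interaction between the diverging no-news threshold $\tau^{no-news}(b,b'|r)\to\infty$ and the independent Poisson arrival times $\tau_1,\tau_2$ must be handled carefully, because that interplay is exactly what pins down the $r=0$ limit allocation rule. Once continuity of $x_i$ is in hand, the remaining steps are routine equilibrium-convergence machinery.
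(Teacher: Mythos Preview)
Your proposal follows essentially the same architecture as the paper: extract a weakly convergent subsequence by compactness, show the limit is an equilibrium of the $r=0$ game, and invoke the uniqueness part of \Cref{proposition:fpa_continuity} to identify it with $(B_{b^*},B_{b^*})$; then handle the pure-strategy addendum separately. The differences are only in the technical bookkeeping. First, where you propose a dominated-convergence argument for the convergence of expected payoffs, the paper instead observes that $e^{-r\tau^{no-news}(\cdot,\cdot|r_n)}$ is monotone in $n$ and converges pointwise to $1$, so Dini's theorem gives \emph{uniform} convergence of $x_i(\cdot,\cdot|r_n)$ to $x_i(\cdot,\cdot|0)$; this makes the joint limit (changing integrand and changing measure) cleaner than a bare dominated-convergence step. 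Second, for the ``furthermore'' clause the paper tests the weak convergence against continuous approximations $f_\varepsilon$ of indicators $\mathbf{1}_{[v-\varepsilon,v+\varepsilon]}$ and lets $\varepsilon\downarrow 0$, whereas you go through monotonicity and a Helly-selection argument; both routes land in the same place, and yours is arguably the more standard way to upgrade weak convergence of graph measures of monotone functions to pointwise convergence.

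One point to tighten: you assert that \Cref{claim1} gives zero mass on the diagonal ``under both $B^*_i(r_n)\times B^*_{-i}(r_n)$ and its weak limit,'' but \Cref{claim1} only speaks to equilibria, not to weak limits of equilibria, and the no-ties property is not in general preserved under weak limits. The paper's write-up is equally casual here; in either approach you will need an extra line to argue that the limiting product measure assigns zero mass to $\{b_1=b_2\}$ (for instance, by first showing the limit is an equilibrium against increasing deviations and then bootstrapping, or by noting that any atom on the diagonal would contradict best-response behavior along the sequence).
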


\begin{proof}
Because the space of actions is compact, consider a subsequence of $(B^*_1(r_{n_k}),B^*_2(r_{n_k})) \Rightarrow (B^*_1(0),B^*_2(0))$. We assert that $(B_1(0),B_2(0))$ is a nash equilibrium in the mixed extension of the game, which by \Cref{proposition:fpa_continuity} implies that $B_i(0) = B_{b^*}$. This further implies that is the unique accumulation point of the sequence $(B^*_1(r_n),B^*_2(r_n))$ which, given that the space is compact, implies that $B^*_i(r_n) \Rightarrow B_{b^*}$.

Next, observe that $e^{-r \tau^{no-news}(b_i,b_{-i}|r_n)}$ is increasing as $n\to \infty$ and converges pointwise to 1, thus by Dini's Theorem we have that it converges uniformly to $1$. The same convergence result can be used for the indicator functions  $\mathbf{1}_{\{\tau_{-i}\geq \tau^{no-news}(b_{i},b_{-i}|r)\}}$, $ \mathbf{1}_{\{\tau_{-i}\geq \tau^{no-news}(b_{i},b_{-i}|r)\}}$ which implies that $x_i(b_i,b_{-i}|r_n)$ converges uniformly to  $x_i(b_i,b_{-i}|r_n)$ and therefore that $u_i(v_i,b_i,b_{-i}|r_{n_k})$ converges uniformly to $u_i(v_i,b_i,b_{-i}|r_{n_k})$.

Since there are no ties on the equilibrium (\Cref{claim1}) so that the payoff functions are continuous when there are no ties (see proof of \Cref{proposition:fpa_continuity}), and the payoff function has uniform convergence, we have that $U_i(B^*_1(r_{n_k}),B^*_2(r_{n_k})|r_{n_k}) \to U_i(B_1(0),B_2(0)|r=0)$ where
$$ U_i(B_1(r),B_2(r),r) = \int \int u_i(v_i,b_i,b_{-i}|r) dB_{-i}(v_{-i}, b_{-i})dB_i(v_i, b_i).$$

To show that $(B_1(0),B_2(0))$ is a Nash Equilibrium, consider a pure strategy $b$. First, suppose that $b$ is increasing on $v$. Observe that
$U_i(B^*_i(r_{n_k}),B^*_{-i}(r_{n_k})|r_{n_k}) \geq U_i(B_{b},B^*_{-i}(r_{n_k})|r_{n_k}) $ since $(B^*_i(r_{n_k}),B^*_{-i}(r_{n_k}))$ is a Nash Equilibrium when the discount rate is $r_{n_k}$. Because $b$ is increasing we know that there are no ties so the payoff function is continuous and has uniform converge as $n_{k}\to\infty$. Therefore, by taking the limit we conclude that $U_i(B_1(0),B_2(0)|r=0) \geq U_i(B_{b},B_2(0)|r=0) $. If the function $b$ is non-decreasing, then for every $\epsilon>0$ an increasing function $\tilde b\geq b$  exists such $U_i(B_{b},B_2(0)|r=0) \leq  U_i(B_{\tilde b},B_2(0)|r=0) + \epsilon$. We conclude by taking $\epsilon \to 0$, that for all feasible bidding strategies $U_i(B_1(0),B_2(0)|r=0) \geq U_i(B_{b},B_2(0)|r=0) $. Therefore, the uniqueness of equilibrium when $r=0$ implies that $B_{i}(0)= B_{b^*}$.
 
 Finally, if $B^*_i(r_n)$ is a sequence of pure strategy equilibria. Consider $f_\epsilon $ a continuous approximation of the indicator function $\mathbf{1}_{[v-\epsilon,v+\epsilon]}$. Because $B^*_i(r_n) \Rightarrow B_{b^*}$ we have that $\int f_\epsilon dB^*_i(r_n) \to \int f_\epsilon dB_{b^*}$. Thus, by taking $\epsilon$ we get the desired pointwise convergence result.
\end{proof}

With the existence of equilibrium at hand and thanks to the fact that the optimal stopping time $\tau^*_{1P}$ is also continuous on the discount rate, we have shown that the equilibrium bidding strategies are continuous on the discount rate. Therefore we can show a generalized version of \Cref{theorem:revenue_r0} for the case where the information cost -- measured by the discount rate -- is small. 

\begin{theorem}\label{theo:revenue_fpa_spa_positive_discount_rate}
Fix a value distribution $F$, there exists a discount rate $\underline{r}>0$ such that for any $r < \underline{r}$ the revenue generated by the \spa dominates the revenue generated by any equilibrium of the \fpadot.
\end{theorem}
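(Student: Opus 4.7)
The plan is to combine the invariance of the \spa revenue in the discount rate with a continuity argument for the \fpa revenue, anchored at the $r=0$ benchmark already established in \Cref{theorem:revenue_r0}. First, \Cref{proposition:SPAstopped_r} shows that for every $r\geq 0$ the optimally-stopped \spa is truthful and executes at time $0$. Since no calendar time passes before the allocation, discounting is never applied, and so the revenue of the \spa is independent of $r$: $\Rev_{2P}(r) = \mu_0\,\mathbb{E}_v\!\big[\max\{\phi(v_1),\phi(v_2)\}\big]$ for all $r\geq 0$, using the virtual-value calculation from the proof of \Cref{theorem:revenue_r0}. At $r=0$ the same theorem yields $\Rev_{2P}(0) = \mu_0^{-1}\Rev_{1P}(0)$, so $\Rev_{2P}(0) - \Rev_{1P}(0) = (1-\mu_0)\,\Rev_{2P}(0) > 0$ whenever $\mu_0 \in (0,1)$. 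This strict gap at $r=0$ is the anchor that I will preserve for small positive $r$.

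Second, I would show that the \fpa revenue $r \mapsto \Rev_{1P}(r)$ is upper semi-continuous at $r=0$ (continuity is enough, but one direction suffices). The existence of an equilibrium $(B^*_1(r),B^*_2(r))$ for each $r\geq 0$ is given by \Cref{proposition:fpa_continuity}. Pick any sequence $r_n \downarrow 0$ and any corresponding equilibria; by \Cref{claim2}, $(B^*_1(r_n),B^*_2(r_n)) \Rightarrow (B_{b^*},B_{b^*})$, where $b^*$ is the unique symmetric equilibrium of the undiscounted \fpa. Third, I would translate this convergence of strategies into convergence of revenues. The per-profile revenue functional
\[
\mathcal{R}(v_1,v_2,b_1,b_2,r) \;=\; \mu_0 \sum_{i=1,2} b_i\,x_i(b_i,b_{-i}|r)\,\mathbf{1}_{\{b_i > b_{-i}\}} \;+\; (\text{tie term})
\]
depends continuously on $(b_1,b_2,r)$ off the diagonal (the allocation kernel $x_i$ is continuous in bids and, by Dini-type uniform convergence in the proof of \Cref{claim2}, uniformly continuous in $r$). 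Since \Cref{claim1} guarantees no atoms on the tie set, the dominated convergence theorem on the bounded support $[0,\overline v]^2$ gives
$\Rev_{1P}(r_n) \to \Rev_{1P}(0)$ along the selected equilibrium sequence, and hence $\sup_{\text{eq.}}\Rev_{1P}(r) \to \Rev_{1P}(0)$ as $r\downarrow 0$.

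Finally, combining the two pieces: $\Rev_{2P}(r) - \sup_{\text{eq.}}\Rev_{1P}(r) \to \Rev_{2P}(0) - \Rev_{1P}(0) = (1-\mu_0)\,\Rev_{2P}(0) > 0$. Hence there exists $\underline r > 0$ such that for every $r \in [0,\underline r)$ and every \fpa equilibrium, $\Rev_{2P}(r) > \Rev_{1P}(r)$, which is exactly the claim. The main obstacle I anticipate is the revenue-continuity step: translating weak convergence of mixed bidding strategies into convergence of the expected revenue integral. The potentially discontinuous indicator $\mathbf{1}_{\{b_i > b_{-i}\}}$ forces one to rule out mass concentrating on the diagonal; \Cref{claim1} does this for each fixed $r_n$, but one must also show that no mass escapes to the diagonal in the limit, which follows because the limiting strategy $b^*$ is strictly increasing, so $(v_1,v_2)\mapsto(b^*(v_1),b^*(v_2))$ puts zero mass on $\{b_1=b_2\}$ whenever $F$ is atomless.
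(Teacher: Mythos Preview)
Your proposal is correct and follows essentially the same approach as the paper: invariance of the \spa revenue in $r$ via \Cref{proposition:SPAstopped_r}, the strict gap at $r=0$ from \Cref{theorem:revenue_r0}, and continuity of the \fpa revenue at $r=0$ obtained by combining \Cref{claim2} (weak convergence of equilibria) with the Dini-type uniform convergence of the allocation kernel and the no-tie property from \Cref{claim1}. Your explicit remark that no mass can accumulate on the diagonal in the limit---because the limiting bid function $b^*$ is strictly increasing and $F$ is atomless---fills in a detail the paper leaves implicit.
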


\begin{proof}
From \Cref{proposition:SPAstopped_r}, we have that the optimal stopping is independent of the discount rate $r=0$, which in turn implies that the auctioneer's revenue when optimally running a \spa is independent of $r$. Thus, $\pi^*_{2P}(r) = \pi^*_{2P}(0)$. 

For the \fpa case, we want to claim that for $r_n\to0$, and any mixed equilibrium of the game when $r_n>0$ we have that $\pi^*_{1P}(r_n) \to \pi^*_{1P}(0) $. 
Indeed, consider a sequence of mixed bidding equilibrium $(B^*_1(r_n),B^*_2(r_n))$. From \Cref{claim2}, we have that $B^*_1(r_n)\Rightarrow B_{b^*}$, where $b^*$ correspond to the symmetric bidding strategy in the unique equilibrium of the \fpa game when $r=0$ (\Cref{proposition:fpa_continuity}). Fixing $\tau_1,\tau_2$,  we have that $\tau^*_{1P}(b1,b2|r_n)\to \min\{\tau_1,\tau_2\}:= \tau_1\wedge \tau_2$ converges pointwise and $\tau^*_{1P}(b1,b2|r_n)$ is increasing on $r_n$. Then, Dini's Theorem implies that the convergence is uniformly on $b_1,b_2$.  Therefore, the following limit holds
\begin{align*}
 &\lim_{n\to \infty}    \int \int  e^{-r \tau^*_{1P}(b_1,b_2|r_n)} \min\{\mu^1_\tau b_1, \mu^2_\tau b_2\}dB^*_1(r_n) dB^*_2(r_n) \\
 & = \int \int \min\{\mu^1_{\tau_1\wedge \tau_2} b_1, \mu^2_{\tau_1\wedge \tau_2} b_2)\} dB^*_{b^*} dB^*_{b^*}\\
& =  \min\{\mu^1_{\tau_1\wedge \tau_2} b^*(v_1), \mu^2_{\tau_1\wedge \tau_2} b^*(v_2)\}
\end{align*}
By taking expectations on $\tau_1,\tau_2$, we conclude that $\pi^*_{1P}(r_n) \to \pi^*_{1P}(0) $.

Finally, we invoke \Cref{theorem:revenue_r0} that shows that $\pi^*_{1P}(0)<\pi^*_{2P}(0)$ and conclude that a $\underline r>0$ exists such that for $r\leq \underline r$, $\pi^*_{1P}(r)<\pi^*_{2P}(r)$.
\end{proof}

\section{Calculations for Poisson Model}\label{appendix}

The evolution of beliefs in a Poisson model follow the derivation below.
\begin{align*}
    \mu_{t+\Delta}^i &= \mathbb{P}(\theta_i = 1 | \text{ no news before } t+\Delta) \\
    &= \frac{\mathbb{P}(\text{no news in } [t,t+\Delta)) \mathbb{P}(\theta_i = 1 | \text{ no news before } t)}{\mathbb{P}(\text{no news in } [t,t+\Delta | \text{ no news before } t)} \\
    &= \frac{\mu^i_t}{\mu^i_t + (1-\mu^i_t)e^{-\lambda \Delta}}
\end{align*}
Then 
\begin{align*}
    \dot{\mu}^i_t = \lim_{\Delta \to 0} \frac{\mu_{t+\Delta}^i - \mu^i_t}{\Delta} &= \lim_{\Delta \to 0}\frac{\mu^i_t}{\Delta} \bigg(\frac{1}{\mu^i_t + (1-\mu^i_t)e^{-\lambda \Delta}} -1 \bigg) \\
    &= \lim_{\Delta \to 0} \frac{\mu^i_t}{\Delta} \bigg(\frac{1 - \mu^i_t - (1-\mu^i_t)e^{-\lambda \Delta}}{\mu^i_t + (1-\mu^i_t)e^{-\lambda \Delta}}\bigg) \\
    &= \lambda \mu^i_t (1-\mu^i_t)
\end{align*}

Fix $b_1 \geq b_2$. In a \fpa the value function for non-zero beliefs is given by
\begin{align*}
    V(\mu_t, \mu_t) = \max &\bigg\{ b_1 \mu_t,     e^{-r\Delta}V(\mu_{t+\Delta}, \mu_{t+\Delta})(\mu_t^2 +2\mu_t(1-\mu_t)e^{-\lambda \Delta} +e^{-2\lambda \Delta}(1-\mu_t)^2) \\
    &+ e^{-r\Delta}\Big(\mu_t(1-\mu_t)(1-e^{-\lambda\Delta})(b_1 \mu_t + b_2 \mu_t) + (1-\mu_t)^2(1-e^{-2\lambda\Delta})\big(\frac{b_1 \mu_t + b_2 \mu_t}{2} \big) \Big)\bigg\}
\end{align*}
while in a \spa it is given by 
\begin{align*}
    V(\mu_t, \mu_t) = \max &\bigg\{ b_2 \mu_t,     e^{-r\Delta}V(\mu_{t+\Delta}, \mu_{t+\Delta})(\mu_t^2 +2\mu_t(1-\mu_t)e^{-\lambda \Delta} +e^{-2\lambda \Delta}(1-\mu_t)^2)\bigg\}.
\end{align*}
For a \spa with reserve price $R<b_2$, the value function is given by 
\begin{align*}
    V(\mu_t, \mu_t) = \max &\bigg\{ b_2 \mu_t,     e^{-r\Delta}V(\mu_{t+\Delta}, \mu_{t+\Delta})(\mu_t^2 +2\mu_t(1-\mu_t)e^{-\lambda \Delta} +e^{-2\lambda \Delta}(1-\mu_t)^2) \\
    &+ e^{-r\Delta}R \mu_t\Big(2\mu_t(1-\mu_t)(1-e^{-\lambda\Delta})+ (1-\mu_t)^2(1-e^{-2\lambda\Delta})  \Big)\bigg\}
\end{align*}
From here onward, we proceed with calculations for the \fpa only. At a belief where the auctioneer wants to continue,
\begin{align*}
    V(\mu_t) &=    e^{-r\Delta}V(\mu_{t+\Delta})\big(\mu_t + (1-\mu_t)e^{-\lambda \Delta}\big)^2 \\
    &+ e^{-r\Delta}(b_1 \mu_t + b_2 \mu_t) \bigg( \mu_t(1-\mu_t)(1-e^{-\lambda\Delta}) +  \frac{(1-\mu_t)^2}{2} (1-e^{-2\lambda\Delta})\bigg)
\end{align*}
\begin{align*}
    0 &= \frac{e^{-r\Delta}V(\mu_{t+\Delta})\big(\mu_t + (1-\mu_t)e^{-\lambda \Delta}\big)^2  - V(\mu_t)}{\Delta} \\
    &+ \frac{e^{-r\Delta}}{\Delta}(b_1 \mu_t + b_2 \mu_t) \bigg( \mu_t(1-\mu_t)(1-e^{-\lambda\Delta}) +  \frac{(1-\mu_t)^2}{2} (1-e^{-2\lambda\Delta})\bigg)
\end{align*}
Taking the limit for $\Delta \to 0$, and writing $\rho = \frac{r}{\lambda}$ we get
\begin{align*}
    V(\mu_t)\big(r + 2\lambda(1-\mu_t)\big)  &=  V'(\mu_t)\dot{\mu}_t + (b_1+b_2)\mu_t\Big(\mu_t(1-\mu_t)\lambda +(1-\mu_t)^2\lambda \Big) \iff \\
\iff    V'(\mu)\mu(1-\mu) &= \rho V(\mu) + (1-\mu)\Big(2V(\mu) - \mu(b_1 + b_2)\Big)
\end{align*}
Note that if $r = 0$ the ODE becomes
\[V'(\mu) \mu = 2V(\mu) - \mu(b_1 + b_2),\]
similar to what we used in \Cref{lemma:SPAreserve}.
The solution to the general ODE is
\[V(\mu) = \frac{\mu(1-\mu)}{\rho + 1}(b_1 + b_2) + K\mu^2\bigg(\frac{\mu}{1-\mu}\bigg)^\rho\]
where 
\[K = \frac{b_1 }{(1+\rho)}\bigg(\frac{b_2}{\rho b_1} - 1\bigg)^{-\rho}\]
which results in 
\[V(\mu) = \frac{\mu(1-\mu)}{\rho + 1}(b_1 + b_2) + \frac{b_1 }{(1+\rho)}\bigg(\frac{b_2 - \rho b_1}{\rho b_1}\bigg)^{-\rho}\mu^2\bigg(\frac{\mu}{1-\mu}\bigg)^\rho\]
We find this value of $K$ by requiring continuous pasting at the threshold $\bar{\mu}$ such that $\bar{\mu}b_1 = V(\bar{\mu})$. This threshold turns out to be such that $1-\bar{\mu} = \frac{\rho b_1}{b_2}$. Then we can rewrite 
\[K = \frac{b_1 }{(1+\rho)}\bigg(\frac{1}{1-\bar{\mu}} - 1\bigg)^{-\rho} = \frac{b_1 }{(1+\rho)}\bigg(\frac{1-\bar{\mu}}{\bar{\mu}}\bigg)^{\rho}\]
and 
\[V(\mu) = \frac{\mu(1-\mu)}{\rho + 1}(b_1 + b_2) + \frac{b_1 }{(1+\rho)}\mu^2\bigg(\frac{\mu(1-\bar{\mu})}{\bar{\mu}(1-\mu)}\bigg)^\rho\]

\end{document}